%% file: main.tex
\documentclass[runningheads]{llncs}

\usepackage{cite}
\usepackage{graphicx}
\usepackage{array}
\usepackage{xspace}

\usepackage{amsmath}
\usepackage{amssymb}
\usepackage{enumitem}
\setlist[itemize]{topsep=0.1em, parsep=0em, itemsep=0.1em}
\setlist[enumerate]{topsep=0.1em, parsep=0em, itemsep=0.1em}
\usepackage[frozencache=true,cachedir=minted-cache]{minted}
\setminted{autogobble, breaklines, numbersep=6pt, xleftmargin=0pt, fontsize=\fontsize88} 
\usepackage{cite}
\usepackage{booktabs}
\usepackage{pgfplots}
\usepackage{pgfplotstable}
\pgfplotsset{compat=1.18}
\usepackage{siunitx}
\usepackage{subcaption}

\usepackage{inconsolata}

\usepackage[colorlinks, citecolor=purple]{hyperref}

\usepackage[capitalise]{cleveref} 

\crefname{equation}{}{}

\allowdisplaybreaks[1]

\newtheorem{notation}{Notation}[theorem]
\newtheorem{myrule}{Rule}  
\crefname{myrule}{Rule}{Rules}

\NewDocumentCommand {\E} { o } {%
    \mathbb{E}{\IfNoValueTF{#1} {} {\left[\,#1\,\right]}}
}
\RenewDocumentCommand {\Pr} { o } {%
    \mathrm{Pr}{\IfNoValueTF{#1} {} {\left[\,#1\,\right]}}
}

\let\+=\mathbf

\let\t\texttt
\newcommand\clang{\t{clang}\xspace}
\newcommand\oclang{\textsf{obliv-clang}\xspace}
\newcommand\olevel{\t{olevel}\xspace}
\def\olevel{\t{olevel}\xspace}

\newif\ifcr

\begin{document}

\title{obliv-clang: Real-World Oblivious Programming in C++}

\ifcr
\else
\subtitle{(Extended Version)}
\fi

\author{Yunqian Luo\inst{1}\orcidID{0009-0000-3366-3722}
\and
Mingyu Gao\inst{1,2}\orcidID{0000-0001-8433-7281}
}

\institute{Tsinghua University, Beijing, China  \\
\and
Shanghai Qi Zhi Institute, Shanghai, China \\
\vspace{4pt}
\email{luoyq23@mails.tsinghua.edu.cn, gaomy@tsinghua.edu.cn}}

\maketitle

\begin{abstract}
  \input{chapters/00_abstract}

\keywords{Compiler \and Trusted Execution Environment \and Oblivious Programming}
\end{abstract}

\input{chapters/10-intro}

\input{chapters/20-background}

\input{chapters/21-semantics}
\input{chapters/30-soundness}
\input{chapters/40-implementation}
\input{chapters/41-evaluation}

\input{chapters/50-related}
\input{chapters/51-conclusion}

\begin{credits}
\subsubsection{\ackname}
The authors thank the anonymous reviewers for their valuable suggestions, and the Tsinghua IDEAL group members for constructive discussion.
The technical solution and experimental design presented in this paper were independently completed by the authors. AI tools were used solely for language polishing and formatting optimization, and did not participate in the development of the research ideas or core content.
\end{credits}

\bibliographystyle{splncs04}
\bibliography{ideal}

\end{document}

%% file: chapters/00_abstract.tex
Side-channel vulnerabilities, particularly timing and access-pattern-based attacks, have become critical issues for confidential data processing in trusted environments.
Oblivious programming is an effective approach to alleviate these attacks by making program execution not leak any secret through execution time and data access traces.
To facilitate oblivious programming in practice, we propose a compilation-time checking tool, \oclang, which can comprehensively check the obliviousness of a program written in C++. 
It is designed to support the rich language features in C++, including the complicated concept of arbitrarily nested pointers, in order to seamlessly work with existing industry-level codebases and produce high-performance compiled binaries with minimum compilation overheads.
We design a set of rules in \oclang and formally prove their soundness in the presence of complicated C++ language features.
We also implement several non-trivial oblivious algorithms as case studies to demonstrate the expressiveness of \oclang, and show that programs compiled using \oclang can outperform previous solutions.

%% file: chapters/10-intro.tex
\section{Introduction}\label{sec:intro}

\ifcr
Confidential data processing in trusted environments, such as trusted execution environments (TEEs), remains vulnerable to various side-channel attacks exploiting shared caches~\cite{gotzfriedCacheAttacksIntel2017}, OS-controlled page tables~\cite{203868,shihTSGXEradicatingControlledChannel2017a}, branch predictors~\cite{leeInferringFinegrainedControl2017}, and speculative execution~\cite{vanbulckForeshadowExtractingKeys2018}.
These attacks succeed because certain sensitive execution properties leak through timing observation and shared hardware units, even when data are architecturally isolated.
The major defenses to these side-channel attacks can be categorized into hardware changes
and software techniques.
They complement each other---timing and I/O pattern issues are best avoided by software, while speculation-based vulnerabilities require hardware fixes.
This work focuses on the software solutions, among which the most prominent approach is \emph{oblivious programming}: making execution traces (timing, I/O patterns) independent of secrets. 
\else
The rapid development of artificial intelligence and big data analytics applications has substantially promoted the profit of high-quality data. Consequently, it has also raised the issue of data privacy protection. The typical way to handle confidential data without leaking sensitive secrets is to process them in a trusted environment. An organization could maintain its own computing servers and control all the hardware and system software, e.g., the operating system (OS). Alternatively, with cloud computing and outsourced computing, they can leverage hardware processors equipped with trusted execution environments (TEEs), which can isolate sensitive execution from untrusted environments. TEEs allow computing on even untrusted machines to become trusted.

However, even in a trusted environment, without careful programming, an attacker can still obtain partial or complete information about the secrets by performing various side-channel attacks. Researchers have published a wide range of such attacks, utilizing shared caches~\cite{gotzfriedCacheAttacksIntel2017}, OS-controlled page tables~\cite{203868,shihTSGXEradicatingControlledChannel2017a}, hardware branch predictors~\cite{leeInferringFinegrainedControl2017}, speculative execution~\cite{vanbulckForeshadowExtractingKeys2018}, etc.
Such attacks are possible because while the sensitive data are isolated at the architecture level, certain properties of the execution could still be leaked through timing observation and shared hardware units. If these side-channel properties depend on the secret values, the attacker can infer valid information from external observation without breaking the architecture isolation.

Various approaches have been proposed to alleviate side-channel attacks, either with extra hardware support~\cite{deutschDAGguiseMitigatingMemory2022, liuGhostRiderHardwaresoftwareSystem2015, kimSTEALTHMEMSystemLevelProtection2012, raneRaccoonClosingDigital2015, yuDataObliviousISA2019a}, or using software techniques~\cite{shihTSGXEradicatingControlledChannel2017a, liuMemoryTraceOblivious2013, cauligiFaCTDSLTimingsensitive2019b, almeidaJasminHighAssuranceHighSpeed2017, sinhaCompilerVerifierPage2017, moodFrigateValidatedExtensible2016, liuObliVMProgrammingFramework2015a}.
Hardware methods may provide more fundamental solutions and less performance loss. But applying hardware changes is quite expensive and usually takes long time. In addition, the required fixes may sometimes conflict with microarchitecture-level performance optimizations and sacrifice the speed of non-sensitive execution.
On the other hand, software solutions focus on making the side-channel information (such as execution time and I/O signals) of a trusted program execution not reveal any secret. This property of a program is called ``obliviousness,'' and oblivious programming aims to write the code in a way to achieve such obliviousness.
Today, hardware and software defenses can be used together to target different side channels; timing and I/O pattern-related issues are best avoided by software, while hardware vulnerabilities like speculation execution require hardware modifications.
\fi

\ifcr
Existing methods for oblivious programming each have drawbacks.
Domain-specific languages (DSLs)~\cite{zahurOblivCLanguageExtensible2015, liuObliVMProgrammingFramework2015a, cauligiFaCTDSLTimingsensitive2019b} require costly toolchain integration, especially in industrial environments with complicated software projects.
Separate modeling tools~\cite{sonObliCheckEfficientVerification2021,zhangCTLLVMAutomaticLargeScale} face consistency issues between models and implementations.
Compiler transformations~\cite{sinhaCompilerVerifierPage2017, raneRaccoonClosingDigital2015} lack high-level application semantics and can only be done conservatively, resulting in performance degradation.
\else
This work aims to design better toolchain support for oblivious programming.
Existing methods for implementing oblivious programs can be classified into several categories, each with its own drawbacks.
Some solutions~\cite{zahurOblivCLanguageExtensible2015, liuObliVMProgrammingFramework2015a, cauligiFaCTDSLTimingsensitive2019b, liuAutomatingEfficientRAMModel2014} propose domain-specific languages for obliviousness, but integrating a new language and its toolchain into existing codebases and development workflow is expensive, if not impractical, especially in an industrial environment with complicated software projects.
Others~\cite{sonObliCheckEfficientVerification2021,zhangCTLLVMAutomaticLargeScale} rely on separate modeling tools to check program obliviousness without impacting the original compilation flow. But ensuring the consistency between the descriptive model and the actual implementation of a program is challenging and highly error-prone.
The last direction~\cite{sinhaCompilerVerifierPage2017, raneRaccoonClosingDigital2015} is to modify the compilers and apply automatic program transformations during compilation. However, without knowing the high-level application semantics, code transformations can only be done conservatively and result in significant performance degradation, or simply cannot be done at all in complex cases.
\fi

\ifcr
In this work\footnote{An extended version of this paper is available at \cite{obliv-clang-extended}.},
\else
In this work, 
\fi
we present \oclang, a compiler plugin for the widely used, industry-class C++ language, which checks the obliviousness of a program statically at the compilation time. 
It is implemented using the popular \clang and LLVM infrastructure. 
As the key principle, \oclang is designed to work directly with C++ and support its rich language features, including pointers and references, function calls, compound types, object-oriented programming, templates, and more. Therefore, it can seamlessly work with existing codebases with minimum porting effort, eliminating the extra burden of using a new domain-specific language in previous work. 
In addition, \oclang is implemented directly in the compiler, so it avoids the consistency issue between separate modeling and implementation. 
Finally, using a familiar, mature, and industry-class language like C++ allows developers to conveniently write their code, and also able to retain high performance for the developed oblivious programs. 

The key challenge of designing a compilation-time checker like \oclang is its soundness, meaning that the check must be comprehensive so that if a program passes, it should be considered safe to execute. The major difficulty is to handle the arbitrarily nested pointers (like ``a pointer to a pointer to a pointer'') in C++, which can be dereferenced, taken-address, and assigned in many ways. In particular, we find that assigning a secret pointer with a public pointer value, which seems benign, could sometimes be unsafe and leak sensitive data (see \cref{example:dangerous-assignment} in \cref{sec:semantics:olevel}).
To address this problem, \oclang formally introduces the concept of oblivious levels (\olevel{}s), together with a set of rules to deduce each variable's \olevel along the program dataflow and to forbid unsafe operations accordingly. These rules are extended beyond simple operations to support also function calls, compound types, and templates.
In such a way, \oclang is able to provide verifiable security with formal proofs. 

Since \oclang is a front-end checker, we also need to ensure that the backend compiler optimization passes do not break the obliviousness property. We thoroughly analyze the relevant optimization passes in LLVM and suppress the ones that may compromise obliviousness. Furthermore, we apply an extra double-check on the generated assembly code using existing tools~\cite{disselkoenFindingEliminatingTiming}, which ensure end-to-end obliviousness.

We have implemented several non-trivial oblivious algorithms, including PathORAM~\cite{stefanov2018path} and Oblivious BFS~\cite{blantonDataobliviousGraphAlgorithms2013}, using our \oclang tool, in order to demonstrate its good expressiveness. 
We also quantitatively evaluate \oclang, showing that the extra obliviousness check incurs minor overheads (less than 18\%) to the compilation time, and it produces relatively high-performance compiled binaries. When compared to previous tools, the programs compiled with \oclang run 27.8\% faster than Jasmin~\cite{almeidaJasminHighAssuranceHighSpeed2017} and 66.9\% faster than FaCT~\cite{cauligiFaCTDSLTimingsensitive2019b} on average. The implementation of \oclang has been open sourced at \href{https://github.com/tsinghua-ideal/obliv-clang}{github.com/tsinghua-ideal/obliv-clang}.

Our contributions in this paper are summarized below:
\begin{itemize}
    \item The design and implementation of a program obliviousness check tool, \oclang, as a compiler plugin working seamlessly with the C++ language.
    \item A set of obliviousness check rules that support the rich set of C++ language features, including arbitrarily nested pointers, function calls, compound types, templates, etc.
    \item A formal proof about the soundness of our obliviousness check rules.
    \item The case studies and evaluation experiments to demonstrate the expressiveness and performance advantages of \oclang.
\end{itemize}

\ifcr\else
The rest of this paper is organized as follows.
\cref{sec:back} provides background and discusses prior tools' limitations.
\cref{sec:semantics,sec:impl} present our design and describe implementation details.
\cref{sec:soundness} gives the soundness proof.
\cref{sec:suppress} discusses how to suppress backend optimization passes to avoid break obliviousness.
\cref{sec:eval} illustrates several case studies and evaluates our design.
\cref{sec:related} discusses related work and \cref{sec:conclusion} concludes the paper.
\fi

%% file: chapters/20-background.tex
\section{Background and Motivation}
\label{sec:back}

\ifcr\else
In this section, we first introduce the concept of oblivious programming and how it mitigates timing- and access-pattern-based information leakage in trusted hardware (\cref{sec:back:back}), and then discuss the limitations of existing approaches to motivate our work (\cref{sec:back:limits}). We next describe our high-level approach and summarize the key challenges (\cref{sec:back:approach}). Lastly, we specify our security model by formally defining properties that oblivious programs must satisfy (\cref{sec:back:model}).
\fi

\subsection{Oblivious Programming}
\label{sec:back:back}

Program obliviousness prevents attackers from learning sensitive information through execution traces (timing, memory, and I/O patterns). Attackers extract traces either directly (through privileged access) or indirectly via shared hardware~\cite{gotzfriedCacheAttacksIntel2017, 203868, shihTSGXEradicatingControlledChannel2017a}.
Oblivious programming requires branch conditions, memory, and I/O patterns to remain independent of sensitive data. Common techniques include redundant branch execution, oblivious memory scanning, and \t{cmove}-like instructions. Researchers have developed efficient algorithms including ORAM~\cite{stefanov2018path}, oblivious BFS~\cite{blantonDataobliviousGraphAlgorithms2013}, and oblivious data structures~\cite{wang2014oblivious}.

Oblivious programming is tricky and subtle---developers can easily break obliviousness through overlooked details. Thus, automatic tools are essential to facilitate correct oblivious programming.

\subsection{Tradeoffs and Limitations of Existing Tools}
\label{sec:back:limits}

\ifcr
Previous work can be classified along two dimensions: (1) whether behaviors are modified, i.e., check-only~\cite{almeidaJasminHighAssuranceHighSpeed2017, sonObliCheckEfficientVerification2021, reparazDudeMyCode2016} vs.\ transforming~\cite{liuObliVMProgrammingFramework2015a, liuMemoryTraceOblivious2013, cauligiFaCTDSLTimingsensitive2019b, raneRaccoonClosingDigital2015, sinhaCompilerVerifierPage2017, shindePreventingPageFaults2016}; 
and (2) at which level to perform analysis, e.g., the frontend abstract syntax tree (AST)~\cite{cauligiFaCTDSLTimingsensitive2019b, sonObliCheckEfficientVerification2021, liuMemoryTraceOblivious2013,sinhaCompilerVerifierPage2017} vs.\ the backend intermediate representation (IR)~\cite{raneRaccoonClosingDigital2015,shindePreventingPageFaults2016}.
Automatic transformations could introduce performance overheads due to conservative handling. Frontend approaches offer better analysis but backend optimizers may undermine their security.
Additionally, many designs rely on DSLs, which require costly toolchain development and lack standard features.
\else
Previous work on compilation-time obliviousness guarantees can be classified along two dimensions. The first considers whether program behaviors are modified: some tools only perform checks~\cite{almeidaJasminHighAssuranceHighSpeed2017, sonObliCheckEfficientVerification2021, reparazDudeMyCode2016}, while others actively transform programs to hide non-oblivious branches and memory accesses~\cite{liuObliVMProgrammingFramework2015a, liuMemoryTraceOblivious2013, cauligiFaCTDSLTimingsensitive2019b, raneRaccoonClosingDigital2015, sinhaCompilerVerifierPage2017, shindePreventingPageFaults2016}.
Automatic transformations simplify programming but introduce complexity, as generating dummy versions for program segments is not always straightforward. In many cases the compiler either has to apply general, conservative transformations that result in unnecessary performance overheads, or simply cannot handle irregular patterns.

The second dimension involves the level at which checks or transformations occur. Some tools target abstract syntax tree (AST) structures in the compiler frontend~\cite{cauligiFaCTDSLTimingsensitive2019b, sonObliCheckEfficientVerification2021, liuMemoryTraceOblivious2013,sinhaCompilerVerifierPage2017}, while others work on low-level intermediate representations (IRs), particularly LLVM IR, in the backend~\cite{raneRaccoonClosingDigital2015,shindePreventingPageFaults2016}.
Frontend approaches preserve more source code information, facilitating high-level analysis to achieve better efficiency. But in these tools, the unchanged backend optimizers may later alter the supposed-to-be oblivious behaviors and thus undermine security.

Besides the above tradeoffs, existing approaches for oblivious programming also have other key limitations.
Many existing designs propose domain-specific languages (DSLs) to simplify compilation-time transformations. However, DSLs require costly toolchain development and lack standard features (editor support, build management, debugging tools, extensive libraries). Additionally, most DSL-based tools focus on small computation kernels, lacking expressiveness for complex systems.
\fi

\subsection{Our Approach}
\label{sec:back:approach}

We build upon C++ and the \clang compiler, verifying obliviousness directly on source code at compilation time. C++ is widely used in industry. It offers well-defined semantics in its specification~\cite{StandardDraftSources2022}, a minimal runtime with close hardware control, and compatibility with numerous software interfaces and libraries.
The \clang compiler offers a friendly, extensible plugin interface used by many tools~\cite{WhatClangd2023, ClangTidyExtraClang2023}, enabling non-intrusive checker implementation.
The \clang frontend processes C++ and C code in a unified AST format, allowing \oclang to handle C code seamlessly---important since many system and cryptographic libraries are written in C.
While our tool is a checker, we also implement backend mechanisms to guarantee obliviousness during code generation (\cref{sec:suppress}).

\ifcr
\textbf{Challenges.}
Working directly with C++ presents challenges due to its rich semantics. 
First, expressing secrecy for arbitrarily nested pointers requires a precise scheme. 
Second, function calls across translation units need obliviousness-aware interfaces. 
Third, object-oriented features require handling relationships between class instances, members, and methods. 
Fourth, templates may instantiate with both sensitive and non-sensitive data. 
Finally, external libraries require annotations invisible to the compiler. 
Our design in \cref{sec:semantics} addresses these.
\else
Rust might also seem attractive given its memory safety guarantees. Nevertheless, we chose C++ for two reasons. First, Rust's behavior lacks comprehensive specification---C++ has a detailed specification describing program semantics, while Rust relies on the latest \t{rustc} implementation as its standard. Second, \t{rustc} currently lacks a stable plugin interface, having deprecated its previous interfaces in 2018~\cite{aturonTrackingIssuePlugin2015}.

\textbf{Challenges.}
Working directly with C++ presents several challenges due to the necessity of comprehensively handling its rich and flexible semantics.
First, a primary difficulty involves handling pointers and references, where expressing secrecy requirements for objects with arbitrarily nested pointers requires a generic, precise, and concise scheme.
Second, function calls present another challenge, as implementations and call sites typically reside in different translation units compiled separately, necessitating obliviousness-aware interfaces to ensure consistency.
Third, supporting object-oriented programming requires consideration of relationships between compound class objects, their members, and methods.
Fourth, templates could generate multiple declarations from a common template, in which some may work on sensitive data and have oblivious requirements, while others do not.
Finally, some cases require external knowledge that cannot be analyzed solely from code, such as calls to external libraries.
Our design in \cref{sec:semantics} addresses these challenges.
\fi

\subsection{Security Model}
\label{sec:back:model}

Our security model focuses on timing and access-pattern-related attacks in two scenarios:
\begin{enumerate}
  \item The victim program executes in a TEE enclave, where adversaries with hardware access can perform access-pattern-based attacks, including timing, cache~\cite{gotzfriedCacheAttacksIntel2017}, and page-table~\cite{203868,shihTSGXEradicatingControlledChannel2017a} attacks.
  \item The victim program executes on normal hardware, where adversaries without hardware access can interact with the program and observe timing.
\end{enumerate}

Side-channel attacks based on power~\cite{lippPLATYPUSSoftwarebasedPower2021}, electromagnetism~\cite{sayakkaraSurveyElectromagneticSidechannel2019}, speculative execution~\cite{vanbulckForeshadowExtractingKeys2018}, rowhammer~\cite{jangSGXBombLockingProcessor2017}, and similar vectors are beyond our scope. 
\ifcr\else
Preventing these vulnerabilities solely through oblivious programming is generally considered impossible or prohibitively expensive. Corresponding countermeasures are studied in orthogonal work~\cite{sonMakingDRAMStronger2017, zhangPowerSideChannels2015}.
\fi

To ensure security under these assumptions, programs must satisfy the following properties in \cref{thm:sec,thm:dataflow-informal}. 
\ifcr\else
The first property ensures adversaries cannot learn secret information from execution traces, while the second prevents secret leakage to the public domain. 
\fi
Formal definitions of object secrecy will be provided in \cref{sec:semantics}.

\begin{theorem}[Secrecy]\label{thm:sec}
  For a program $P$ with no undefined behaviors, and any external input $x$,
  let $v_\text{pub}(x)$ be the set of values of its public objects during execution, $\textsf{Trace}(P, x)$ be the execution trace.
  There exists a probabilistic polynomial time (p.p.t.) simulator $\textsf{Simu}$ such that no p.p.t. adversary $\mathcal A$ can distinguish $\textsf{Trace}(P, x)$ and $\textsf{Simu}(v_\text{pub}(x))$.
  Formally:
  \begin{equation}
    \forall x, \Big\lvert
    \Pr[\mathcal A(v_\text{pub}(x), \textsf{Trace}(P, x))]
    -\Pr[\mathcal A(v_\text{pub}(x), \textsf{Simu}(v_\text{pub}))]
    \Big\rvert < \epsilon(n),
  \end{equation}
  where $n$ is a security parameter, and $\epsilon$ is a negligible function.
\end{theorem}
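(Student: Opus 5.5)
The plan is to derive the statement from a trace-determinism property rather than from any cryptographic assumption. The simulator will simply be a deterministic replay of the public part of the execution. I would first fix an operational semantics for the checked C++ fragment. In it, every step emits an observable event: the branch taken, the address of each memory read or write, the I/O action, and the timing cost of the instruction. $\textsf{Trace}(P,x)$ is then the sequence of these events.

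The central lemma I would aim for is a \emph{noninterference} statement. Take two inputs $x, x'$ with $v_\text{pub}(x)=v_\text{pub}(x')$, run a program $P$ that passes the \olevel rules of \cref{sec:semantics}, and assume it has no undefined behaviors. Then $\textsf{Trace}(P,x)=\textsf{Trace}(P,x')$. Given this lemma, $\textsf{Simu}(v_\text{pub})$ picks any input $x^\ast$ consistent with $v_\text{pub}$ and outputs $\textsf{Trace}(P,x^\ast)$. Equivalently, it re-executes $P$ with the public values fixed and secret objects filled with arbitrary dummies, which runs in polynomial time. The two distributions are then identical, so the advantage of any $\mathcal A$ is $0<\epsilon(n)$. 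Any randomness in $P$ (e.g. in ORAM) would be drawn by the simulator with the same distribution and coupled with the real run, which gives equality of distributions.

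To prove the lemma, I would run a lockstep simulation between the two executions, by induction on the number of steps. The invariant is a low-equivalence relation on states. Objects whose \olevel marks them public hold equal values in both runs, and so do pointers at every public nesting level, including the addresses they point to. The control stacks coincide, and the traces so far coincide. The inductive step is a case analysis over expression and statement forms. Arithmetic on public operands gives equal results. Branch and loop conditions, array indices, and dereferenced pointers must be public by the rules, so the emitted events agree. Assignments either write public values into public locations, or write into secret locations, which leaves the invariant untouched. Secret operations such as \t{cmove} are constant-time by assumption on the backend, so only their operands' locations are observed. Function calls rely on the \olevel-annotated signatures, so that callee entry preserves the invariant. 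Templates and compound types reduce to per-instantiation and per-member versions of the same cases.

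I expect the main obstacle to be the pointer cases: dereference, address-of, and assignment through multi-level pointers. The aliasing issue of \cref{example:dangerous-assignment} means that a secret pointer assigned a public pointer value could later be used to write secret data where public data lives, or to derive an address from a secret. I would handle this by strengthening the invariant to a typed heap invariant. Every location reachable through a pointer declared with a given \olevel profile is stored at a location whose \olevel is at least as public as the profile requires, and the \olevel deduction rules forbid exactly the assignments that break this monotonicity. Proving that each pointer rule preserves this heap invariant is where most of the work lies. The no-undefined-behavior hypothesis is essential here, because it excludes out-of-bounds and type-punned accesses that would invalidate the typing of memory.
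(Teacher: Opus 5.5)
\textbf{Gap: the heap invariant.} Your proof rests on the typed heap invariant, and as stated it is too weak. It is only the monotone half: $\t{obliv}(a\down{k}) \geq \t{obliv}(b\down{k})$ for every $k$ whenever $a$ aliases a primitive $b$. That is not an inductive invariant.

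Take primitives \t{[[obliv("00")]] int *p} and \t{[[obliv("1")]] int s}, and a state in which \t{[[obliv("100")]] int **q} holds \t{\&p}. Every access path satisfies your invariant: \t{*q} is declared \t{"10"} and \t{p} is \t{"00"}. The assignment \t{*q = \&s} passes \cref{rule:casting}. Afterwards the public-typed \t{*p} aliases the secret \t{s}, and \t{if (*p)} passes \cref{rule:condition,rule:deref} while branching on \t{s}. Such a state is unreachable only because condition 1 of \cref{rule:casting} rejects \t{q = \&p}; condition 2 alone would accept it. So the rules do not forbid ``exactly'' the monotonicity-breaking assignments. They forbid more, and your invariant must record that extra.

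The fix is to also carry the equality clause of \cref{lem:invariant}: $\t{const}$ and $\t{obliv}$ of $a\down{k'}$ and $b\down{k'}$ coincide for all $k'$ at or beyond the first non-\t{const} level $K(a)$. The paper proves this by taking the earliest time a violating alias exists. It then decomposes the assignment $c \gets d$ that created it, with $a\up{k} \sim c \sim p$ and $d\down{k} \sim b$. The equality for the mutable $c$ is what lets the chain pass from $p\down{k+k'}$ to $c\down{k+k'}$; with only your inequality, that step points the wrong way. From the lemma, a public-typed lvalue aliases only public primitives, which closes your read case. \cref{cor:non-const} then closes your write case.

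\textbf{Comparison with the paper.} The paper's own proof has no two-run lemma. It observes that, in its execution model, the trace is a function of the program text and of the values of conditions and dereferenced pointers, including array indices. \cref{rule:condition,rule:deref} make these public, so $\textsf{Simu}$ reads the trace off $v_\text{pub}$ deterministically and the advantage is $0$. That argument silently uses the read-side aliasing fact above, so your instinct that aliasing is the crux is sound. This theorem needs only the read side, because $v_\text{pub}$ records whatever the public objects actually hold. The guarantee that no secret reaches public memory is \cref{thm:dataflow}, a separate statement.

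\textbf{Simulator details.} Your simulator must not resample randomness that reaches public values. The adversary also holds $v_\text{pub}$, so fresh draws (e.g.\ ORAM positions) would be inconsistent with it. No randomness is needed. Neither is the search for a consistent $x^\ast$, which need not be efficient.
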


\begin{theorem}[Dataflow (informal)]\label{thm:dataflow-informal}
  There is no data flow from secret objects to public objects.
\end{theorem}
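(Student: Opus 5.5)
We will prove \cref{thm:dataflow-informal} in the form it takes once \cref{sec:semantics} fixes the notion of secrecy via \olevel{}s: if a program passes the \oclang rules, then no value of a public object (any object whose \olevel marks it public at every dereference depth) ever depends on the value of a secret object. Concretely, we will take two executions of $P$ on inputs that agree on all public parts, with secret parts arbitrary, and show that every public object holds the same value at corresponding steps. This noninterference formulation is the rigorous meaning of ``no data flow''. We will prove it by induction on the length of the execution, following the small-step semantics introduced in \cref{sec:semantics}.

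The induction invariant has three parts. (i) Every public object has equal values in the two runs. (ii) Every pointer value stored in an object whose \olevel marks its pointee as public points to the same location in both runs. (iii) Control flow agrees, since branch conditions are required to be public. Part (iii) rules out implicit flows, because no assignment to a public object can be guarded by a secret branch.

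The inductive step goes case by case over expression and statement forms. For arithmetic and comparisons, the rules assign the result the join of the operand \olevel{}s, so a public result has only public operands, which are equal by the invariant. For an assignment \texttt{lhs = rhs}, the rules require that the \olevel of \texttt{rhs} be no higher than that of \texttt{lhs}; hence a public target receives a public value. For dereference \texttt{*p}, the object read is determined by $p$. If $p$ is public, invariant (ii) gives the same location in both runs, and that location's \olevel is fixed by $p$'s type. If $p$ is secret, the rules mark the result secret. Address-of \texttt{\&x} produces a pointer whose pointee \olevel equals that of \texttt{x}. Function calls reduce to assignments of arguments and return values through the declared oblivious interfaces. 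Compound types and templates are handled per member and per instantiation.

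The main obstacle is aliasing through nested pointers, the situation of \cref{example:dangerous-assignment}. Storing a public pointer into a secret-level pointer variable could create an alias through which a later write of secret data lands in a public object, or through which a public object is read under a secret label and later written back. To handle this, we will strengthen invariant (ii) to a typing invariant on the heap: every location reachable through a pointer of declared \olevel $\ell$ at depth $k$ genuinely has \olevel equal to the $k$-th component of $\ell$, with equality rather than an inequality at inner depths. We will show that the \olevel rules for pointer assignment require exact agreement at all levels below the top. That equality is what preserves the strengthened invariant, much as invariance of mutable reference types works in standard type systems. We will also use the absence of undefined behavior to exclude casts and out-of-bounds accesses that would forge pointers. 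With this heap invariant in place, a write through any pointer lands on an object whose secrecy matches the static label, and the assignment case above closes the induction.
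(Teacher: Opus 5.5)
Your proof breaks at the strengthened heap invariant. You assert that the pointer-assignment rules ``require exact agreement at all levels below the top''. You then plan to maintain that every location reached through a pointer of declared \olevel{} $\ell$ at depth $k$ has exactly the $k$-th digit of $\ell$ as its \olevel{}. \cref{rule:casting} does not give this. Condition~1 forces equality only at depths $k' \geq k$ for a $k>0$ with $\t{const}(a\down{k})=0$. On a \t{const}-qualified prefix only condition~2, $\t{obliv}(a\down{k}) \geq \t{obliv}(b\down{k})$, applies, so the target may be strictly more oblivious there. The paper's own \cref{example:assignment-rule} accepts \t{[[obliv("110")]] int *const *const ypp = xpp;} with \t{xpp} of \olevel{} \t{100}. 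Afterwards \t{*ypp} is labeled oblivious but aliases \t{xp}, a public object. So your invariant is false for a program that passes the checker, and the induction already fails at this initialization. Your invariant also never mentions constness, which is exactly what decides where a label may exceed the real \olevel{}.

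The missing idea is the \t{const}-sensitive invariant of \cref{lem:invariant}. If $a$ aliases a primitive $b$, then $\t{obliv}(a\down{k'}) \geq \t{obliv}(b\down{k'})$ for all $k'$. Equality of both \t{obliv} and \t{const} holds only from the first non-\t{const} depth $K(a)$ onward. Over-approximation on read-only levels is harmless. Since an assigned lvalue is non-\t{const}, its label is exact wherever writes happen (\cref{cor:non-const}), and that is what your assignment case needs. The nontrivial step, absent from your sketch, is preserving this weaker invariant when an assignment $c \assigned{t'} d$ creates a new alias with $a\up{k} \alias{t'} c$ and $d\down{k} \alias{t'} b$. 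The paper handles this with a minimal-counterexample argument that chains $K(c\down{1})$, $K(c\down{k})$, $K(d\down{k})$ and $K(a)$.

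With that lemma in hand, your two-run noninterference argument would give a semantic version of the statement. The paper instead formalizes it as the single-run, label-based \cref{thm:dataflow} (via \cref{cor:non-const}) plus \cref{thm:dataflow-dep}, which it reads off from \cref{rule:arith}.

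You should also re-index invariants (i)--(ii). What must agree across runs is every value whose $0$-th digit is $0$, not only objects public at every depth. For example, \t{pc} of \olevel{} \t{10} is not covered, yet its value can feed a public comparison. Invariant (ii) as stated is false: \t{arr + i}, for a public array \t{arr} and secret \t{i}, has \olevel{} \t{01} by \cref{rule:pointer-arith} but differs across runs. This is harmless only because such a pointer can never be dereferenced.
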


%% file: chapters/21-semantics.tex
\section{Design}
\label{sec:semantics}

We designed \oclang, a C++ compiler plugin for \clang that verifies program obliviousness directly on source code at compilation time.
As a checker, \oclang's primary goal is \emph{soundness}: programs passing its checks should be considered safe.
This section details \oclang's semantics, including the required programmer annotations to indicate protected data and the comprehensive rules enforced to ensure obliviousness. We focus on addressing challenges of supporting flexible C++ semantics---pointers, references, function calls, and compound types---while maintaining compatibility with external libraries.
\ifcr
The formal soundness proof is provided in the extended version of this paper~\cite{obliv-clang-extended}.
\fi

\subsection{Getting Started: Oblivious Variables}

The basic information required by \oclang is the knowledge of which variables should be treated as secret. Since a compiler cannot inherently determine which objects are sensitive, we require programmer annotations in the source code to provide this information. We believe such annotations only add moderate burden to programmers, as programmers typically well understand the purposes and secrecy requirements of their variables.

Beginning with a minimal language subset (no references, no nested pointers, only plain values), an object's secrecy can be described by a boolean flag. \oclang requires secret variables to be annotated with the custom \t{obliv} attribute.
C++ specifications allow compilers to define custom attributes as language extensions~\cite[Chapter 9.12]{StandardDraftSources2022}, which can be applied to declarations, statements, translation units, and other elements.
Furthermore, with the \clang toolchain, parsing attributes can be easily done using plugins, without intrusive changes to the compiler's internal core logic (\cref{sec:impl}).

To satisfy the Secrecy and Dataflow properties from \cref{sec:back:model}, the following rules are required:

\begin{myrule}[Basic Arithmetic]\label{rule:basic-arith}\normalshape Any expression depending on an oblivious object is also oblivious. Note that we only consider value dependencies; comma expressions, compilation-time constant expressions (e.g., \t{sizeof}, \t{alignof}), and take-address operations are not considered dependencies.
\end{myrule}

\begin{myrule}[Basic Casting]\label{rule:basic-casting} \normalshape It is forbidden to cast an oblivious expression to a non-oblivious expression.
\end{myrule}

\begin{myrule}[Basic Dereference]\label{rule:basic-deref}\normalshape It is forbidden to dereference an oblivious pointer. Similarly, it is forbidden to take an oblivious object as the base address or index in an array subscript expression.
\end{myrule}

\begin{myrule}[Allocation]\label{rule:allocation}\normalshape It is forbidden to create an object whose size is oblivious (e.g., \t{new[]} with oblivious length).
\end{myrule}

\begin{myrule}[Condition]\label{rule:condition}\normalshape The condition in any conditional clause (\t{if}, \t{while}, \t{for}, \t{switch}, ternary expressions, short-circuit boolean expressions) must not be an oblivious expression.
\end{myrule}

Note that ``cast'' in \cref{rule:basic-casting} refers to both explicit and implicit casting. In assignments, the right side is first cast into the target type before assignment to the left side, so this can also be considered as an assignment rule.

The following code snippet demonstrates these rules:

\begin{example}\label{example:obliv-var}
	~
	\begin{minted}{cpp}
    #define OBLIV [[obliv]]  // to make it look less cumbersome

    // declare an oblivious variable
    OBLIV int x = 1729;
    // error: cannot assign oblivious value to public variable
    int y = x + 3;
    // ok
    OBLIV int z = x * 10;
    // error: cannot take oblivious value as control-flow condition
    if (x > 0) x = x - 1;
    // error: cannot take oblivious value as array subscript
    int *arr = new int[16]; int w = arr[x % 16];
\end{minted}
\end{example}\nopagebreak

\subsection{Obliviousness Levels}
\label{sec:semantics:olevel}

For a pointer, obliviousness has dual meanings: the pointer's own value (which memory address it points to) and its dereferenced value (content at that address). With arbitrarily nested pointers in C++, this becomes increasingly complex.

We propose the concept of \emph{obliviousness level} (\olevel) to describe multi-level obliviousness. An expression's \olevel is a binary string where the $i$-th digit (counting from the right, starting from 0) represents the obliviousness after dereferencing it $i$ times. An expression whose \olevel ends with 1 is considered oblivious. In our discussion, an expression is less/more oblivious if its last \olevel digit is less/more than another's; an expression has no greater/smaller \olevel if every digit of its \olevel is no greater/smaller than the corresponding digit of another; ``greater/smaller'' is defined as the contrary of ``no greater/smaller.''

Dataflow occurs with assignment expressions. To satisfy the dataflow property, we should not allow assigning an expression with another of a greater \olevel. However, assignments with smaller \olevel can also be dangerous. In \cref{example:dangerous-assignment}, \t{ppc} of \olevel \t{100} is assigned with \t{\&public\_c} of smaller \olevel \t{0}:

While assignments between different \olevel{}s appear unsafe, some cases with smaller \olevel{}s seem benign. For example, assigning a public value to a secret value, is clearly safe --- we can not pass any secret information into public by this assignment. But when pointer nests get deeper, it is much less oblivious whether an assignment is safe.

\begin{example}\label{example:dangerous-assignment}
	~
	\begin{minted}{cpp}
  char public_c = 'p';
  [[obliv("1")]] char secret_c = 'x';
  [[obliv("10")]] char *pc = &secret_c;
  [[obliv("100")]] char **ppc = &pc;

  // if obliv("00") to obliv("10") is allowed
  *ppc = &public_c;
  *pc = secret_c;  // 'x' flows to public_c!
  \end{minted}
\end{example}\nopagebreak

\ifcr
To formalize the dataflow property, we distinguish between the \olevel in an expression's type and the \emph{inherent \olevel} of the underlying memory location.
In C++, lvalues correspond to memory areas; their inherent \olevel is defined by the allocation site (variable declarations for stack/static allocations, or \t{new} expressions for heap allocations).
We call lvalues corresponding to allocations \emph{primitive lvalues}. Two lvalues \emph{alias} when they refer to the same memory area. In a well-formed program, every lvalue aliases with a primitive lvalue.
We formalize \cref{thm:dataflow-informal} as:
\else
To determine permissible assignments, we must rigorously define our dataflow theorem. Undesired dataflow occurs when oblivious values flow into non-oblivious memory cells. Since we allow casting/assignments between different \olevel{}s, we must distinguish between the \olevel embedded in the expression type and the actual \olevel carried at runtime.

In C++, expressions are classified as lvalues (can be taken address and can appear on an assignment left side if not constant) or rvalues (cannot be taken address or appear on an assignment left side). Since every lvalue corresponds to a memory area, its ``actual'' \olevel{}---the \emph{inherent \olevel}---is defined as the \olevel of its corresponding memory allocation. For stack/static allocations, this is the \olevel of the corresponding variable; for heap allocations, it is the \olevel of the corresponding \t{new} expression.

We call lvalue expressions corresponding to memory allocations (local/global variables, dereferences of \t{new} expressions) \emph{primitive lvalues}. Two lvalues can \emph{alias} each other, meaning they correspond to the same memory area. In a well-formed program, every expression corresponds to allocated memory, thus aliasing with a primitive lvalue.

For rvalues, obliviousness depends on the operator and operands according to rules we will elaborate below. We do not define inherent \olevel for rvalues, but for a pointer rvalue \t{p}, we can define the inherent \olevel of \t{*p} since \t{*p} is a lvalue.

With these concepts, we formally describe \cref{thm:dataflow-informal} as two theorems:
\fi

\begin{theorem}[Dataflow of Assignment]\label{thm:dataflow}
	There is no execution of an assignment statement whose assignee (which must be an lvalue) is aliasing a non-oblivious lvalue, but the assigner is an oblivious value.
\end{theorem}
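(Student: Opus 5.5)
The plan is to prove \cref{thm:dataflow} by establishing a stronger invariant over all executions and then reading off the theorem as a corollary. The invariant is \emph{\olevel consistency}: at every point of execution, for every lvalue expression $e$ that the program evaluates, the \olevel in the type of $e$ is no smaller than the inherent \olevel of the primitive lvalue that $e$ aliases. The comparison is digit-wise, over all dereference depths, and not only on the last digit.

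Given this invariant, the theorem follows quickly. Suppose an assignment writes to an assignee aliasing a non-oblivious primitive lvalue. Because the assignee's storage is non-oblivious, the assignment rule (\cref{rule:basic-casting}) compares the assigner against the assignee's static \olevel. We then need to show that this static \olevel is exactly, not merely at least, the inherent one in its last digit. Otherwise a statically oblivious alias of public storage could accept an oblivious value.

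This pushes us to strengthen the invariant to \emph{equality} of all digits at depth $\ge 1$ reached through pointers. At depth 0, a strictly larger static digit is only harmless if it arises from a value copy, and the assignment rule forbids writing through an lvalue whose \olevel differs from the aliased memory. So the invariant I will carry is:
\begin{itemize}
\item every lvalue's type \olevel equals the inherent \olevel of its aliased primitive lvalue;
\item every pointer rvalue $p$ has the property that the type \olevel of $*p$ equals the inherent \olevel of what it points to.
\end{itemize}
Under this invariant, an assignee aliasing non-oblivious memory has a type ending in 0. By \cref{rule:basic-arith} and \cref{rule:basic-casting}, no oblivious expression can be cast to it, which gives the theorem.

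The invariant is proved by induction on execution steps, equivalently on the structure of expressions and statements in the small-step semantics. The cases are as follows.
\begin{itemize}
\item \textbf{Primitive lvalues.} Variable declarations and \t{new} expressions satisfy the invariant by definition of the inherent \olevel.
\item \textbf{Address-of.} Taking \t{\&e} shifts the \olevel left by one digit (prepending the pointee's \olevel), which preserves equality.
\item \textbf{Dereference.} \t{*p} drops the last digit. This is permitted only when that digit is 0 (\cref{rule:basic-deref}), and equality of the remaining digits is exactly the pointer-rvalue clause of the invariant.
\item \textbf{Array subscripts and arithmetic.} These are handled like dereference. Pointer arithmetic keeps the pointee \olevel, and the absence of undefined behavior guarantees that the result still aliases the same allocation.
\item \textbf{Casts and assignments of pointer type.} This is the crucial case. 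I will show that the paper's assignment and cast rules for \olevel{}s, which are still to be stated after \cref{example:dangerous-assignment}, only permit changing the last digit upward. The higher digits must match exactly. Otherwise the stored pointer would violate the pointer-rvalue clause, and \cref{example:dangerous-assignment} is precisely the counterexample when they do not match.
\end{itemize}
Because stores change what later dereferences alias, the induction must also carry the invariant for every pointer \emph{value stored in memory}, not just for expressions. The inductive hypothesis is therefore stated over the whole memory state.

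The remaining steps extend the same case analysis to the richer features:
\begin{itemize}
\item \textbf{Function calls.} Parameter binding is treated as assignment, so the same constraints on declared parameter \olevel{}s apply.
\item \textbf{References.} These are treated as non-null pointers that are implicitly dereferenced.
\item \textbf{Compound types.} Members inherit per-field \olevel{}s from the enclosing object's allocation.
\item \textbf{Templates.} Each instantiation is checked as an ordinary function.
\end{itemize}

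I expect the main obstacle to be the pointer-typed cast and assignment case, together with making the memory-state invariant precise. It has to cover pointers stored at arbitrary nesting depth, and it has to show that a permitted write through \t{*ppc} cannot install a pointer whose inherent pointee \olevel disagrees with what other aliases such as \t{pc} assume. My first attempt will be to define the invariant via a recursive ``well-\olevel{}ed value'' predicate on typed values in memory. I would then prove that every rule-permitted store writes a well-\olevel{}ed value into a location whose inherent \olevel matches the static type, so that aliases of the same location automatically agree.
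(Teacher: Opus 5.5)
\paragraph{The invariant is not true of accepted programs.} Your overall architecture matches the paper's. The paper also maintains an aliasing invariant over execution time, with the only nontrivial step being a store $c \gets d$ of a pointer value that makes some $a$ newly alias a primitive $b$; its minimal-counterexample argument in \cref{lem:invariant} is induction on time. The gap is the invariant itself. You require that every lvalue's type \olevel \emph{equals} the inherent \olevel of the primitive it aliases. You plan to justify this by showing that casts and assignments only let the last digit go up, with all higher digits matching exactly. That is false for the paper's \cref{rule:casting}. Condition~1 forces agreement of $\t{const}$ and $\t{obliv}$ only at depths $k' \ge k$, where $k>0$ is a level with $\t{const}(a\downarrow^{k})=0$. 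At const-qualified levels above it, condition~2 allows digits to increase. \cref{example:assignment-rule} shows this directly: \t{ypp}, of type \t{int *const *const} and \olevel \t{110}, is legally initialized from \t{xpp} (\olevel \t{100}). The lvalue \t{*ypp} therefore has a type \olevel ending in 1 while aliasing the public pointer \t{xp}. Similarly, a const reference annotated oblivious may bind to a public \t{int}, which breaks equality already at depth $0$. So your ``pointer casts and assignments'' case cannot be proved, because your invariant fails on programs the rules accept. \cref{example:dangerous-assignment} is rejected not because higher digits must match, but because the intermediate level \t{*ppc} is non-const.

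\paragraph{The missing idea is constness.} The invariant that is actually preserved (\cref{lem:invariant}) is one-sided and const-aware. If $a$ aliases a primitive $b$, then $\t{obliv}(a\downarrow^{k'}) \ge \t{obliv}(b\downarrow^{k'})$ at every depth. In addition, both $\t{const}$ and $\t{obliv}$ agree at every depth from the first non-const level of $a$ downward. Tracking $\t{const}$ agreement is essential to the inductive step. When $a\uparrow^{k} \sim c$, $c \gets d$ and $d\downarrow^{k} \sim b$, you must compare the first non-const levels of $a$, $c\downarrow^{k}$ and $d\downarrow^{k}$. Only then can you chain the equality ranges given by the induction hypothesis and by condition~1. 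With this invariant the theorem follows as in the paper:
\begin{enumerate}
\item Condition~2 forces the assignee to be oblivious, since it receives an oblivious value.
\item An assignee is non-const, so its last digit equals the primitive's (\cref{cor:non-const}); hence the primitive is oblivious.
\item The non-oblivious alias has last digit at least the primitive's, which is a contradiction.
\end{enumerate}
Your remark that one cannot write through an lvalue whose \olevel differs from the aliased memory is the right intuition. It has to be formalized as this weakened, const-sensitive invariant, not as equality; your ``well-\olevel{}ed value'' predicate would work if restated that way.
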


\begin{theorem}[Dataflow of Value Dependency]\label{thm:dataflow-dep}
	An expression whose value depends on oblivious values is also oblivious.
\end{theorem}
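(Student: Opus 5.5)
The statement to prove is \cref{thm:dataflow-dep}. I would prove it by structural induction on the expression syntax, using the typing discipline established by the \olevel rules. The claim has a static part and a dynamic part. Statically, the \olevel assigned to an expression by \oclang's deduction rules is oblivious whenever some operand contributing to its \emph{value} has an oblivious \olevel. Dynamically, that static \olevel is sound with respect to the inherent \olevel of the memory it reads.

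I would fix the notion of ``depends on'' exactly as in \cref{rule:basic-arith}: only value dependencies count. Comma left operands, \t{sizeof}/\t{alignof}, and address-of are excluded, and address-of is handled separately because it only shifts the \olevel string.

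The base cases are literals and primitive lvalues. A literal depends on nothing. For a primitive lvalue, the static \olevel coincides with the declared, hence inherent, \olevel by definition of primitive lvalues.

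For the inductive cases I would split by expression kind.
\begin{enumerate}
\item Arithmetic, logical and comparison operators: \cref{rule:basic-arith} directly makes the result oblivious if any operand is.
\item Casts: \cref{rule:basic-casting} forbids lowering obliviousness, so an oblivious operand yields an oblivious result.
\item Assignment expressions as values: the value of \t{a = b} is the new value of \t{a}, whose type is at least as oblivious as \t{b} by the casting rule.
\item Function calls: the return value is treated via the annotated interface and the same casting argument applied to the return statement.
\item Conditionals and short-circuit operators: \cref{rule:condition} rules out dependence through the condition, and the value branches are handled by induction.
\end{enumerate}

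The key case, and the main obstacle, is dereference and subscript, \t{*p} and \t{a[i]}. Here the value read comes from memory, so I must show the following invariant. For every lvalue $e$ whose static type says non-oblivious, the aliased primitive lvalue also has a non-oblivious inherent \olevel (last digit 0). Equivalently, the static last digit is never smaller than the inherent one.

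I would prove this invariant simultaneously with \cref{thm:dataflow}, by induction on execution steps. Pointer values are only created by address-of, which records the exact \olevel of the addressed lvalue prefixed by one digit. They are only modified by assignments and casts, and the assignment rules must forbid changing inner digits in either direction. \Cref{example:dangerous-assignment} is exactly the reason this is needed. With that, dereferencing a pointer of static \olevel $s$ yields an lvalue whose inherent \olevel equals $s$ shifted right by one. A non-oblivious read then cannot fetch a value stored through an oblivious lvalue.

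Finally, memory contents written through aliasing lvalues are covered by \cref{thm:dataflow}, so public cells never hold oblivious values. Combining this invariant with the operator cases closes the induction. The remaining technical care is in compound types and references, which reduce to the same scheme by viewing members and references as lvalues with their own \olevel strings.
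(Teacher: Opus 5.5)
Your static part is already the paper's entire proof. The paper proves \cref{thm:dataflow-dep} in one sentence by appeal to the operator deduction rule \cref{rule:arith}. That rule explicitly excludes pointer reference and dereference operators from the notion of dependency, so \t{*p} just receives the shifted \olevel of \cref{rule:deref} and nothing dynamic is needed. Flows through memory are deliberately left to \cref{thm:dataflow} (the paper notes the theorems are independent), whose aliasing invariant is \cref{lem:invariant}. So the case you call the key obstacle lies outside this theorem as the paper formalizes it. More seriously, your argument for that case does not go through.

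The failing step is the claim that the assignment rules ``forbid changing inner digits in either direction'', so that dereferencing a pointer of static \olevel $s$ yields an lvalue whose inherent \olevel is exactly $s$ shifted right. \Cref{rule:casting} does not say this: condition~1 demands equality only from the first non-\t{const} level onwards, and condition~2 lets digits increase. Indeed \cref{example:assignment-rule} accepts \t{ypp} of type \t{int *const *const} and \olevel \t{110}, initialized from \t{xpp} of \olevel \t{100}. Afterwards \t{*ypp} has static \olevel \t{11} but aliases \t{xp}, whose inherent \olevel is \t{10}. Your induction on execution steps therefore breaks at the first such \t{const}-upcast.

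Retreating to the inequality you actually state (static last digit $\geq$ inherent) is not enough either. You prove the invariant jointly with \cref{thm:dataflow} and rely on ``public cells never hold oblivious values''. The assignment \t{*ppc = \&public\_c} of \cref{example:dangerous-assignment} satisfies condition~2 and preserves that inequality at every level, yet right after it \t{*pc = secret\_c} writes a secret into \t{public\_c}. Reads need the forward inequality, while writes through non-\t{const} lvalues need the reverse one.

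The invariant that is both true and inductive is the two-tier statement of \cref{lem:invariant}, with $K(a)$ as defined there:
equality of the \t{const} and \t{obliv} digits for all $k' \geq K(a)$, and $\t{obliv}(a\down{k'}) \geq \t{obliv}(b\down{k'})$ for all $k' \geq 0$.
The paper proves it by taking a minimal-time counterexample and analyzing the assignment $c \assigned{} d$ that creates the new alias ($a\up{k} \alias{} c$ and $d\down{k} \alias{} b$). It then obtains \cref{cor:non-const} and \cref{thm:dataflow} from it. Either replace your exact-equality claim by that lemma, or drop the dynamic part; in the latter case the static argument via \cref{rule:arith} is the whole proof.
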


\newcommand{\alias}[1]{\stackrel{#1}{\sim}}
\newcommand{\assigned}[1]{\stackrel{#1}{\gets}}
\newcommand{\down}[1]{\downarrow^{#1}}
\newcommand{\up}[1]{\uparrow^{#1}}
\ifcr
We use the following notations: $x \down{k}$ denotes dereferencing $x$ by $k$ times; $\t{const}(x)$ and $\t{obliv}(x)$ denote the constantness and obliviousness of $x$.
\else
\begin{notation}
	We use the notations below for convenience:
	\begin{itemize}
		\item For an lvalue $x$ and a non-negative integer $k$, denote by $x \down{k}$ the expression obtained by indirecting $x$ by $k$ times. In other words, $x \down{0}$ is $x$, and $x \down{k+1}$ is \t{*($x\down{k}$)}. Similarly, we denote by $x \up{k}$ the expression obtained by taking-address (i.e., the unary \texttt{\&} operator) on $x$ by $k$ times.
		\item For two lvalues $x$, $y$ and time $t$, if at time $t$, the expressions $x$ and $y$ correspond to the same memory cell, we say that $x$ and $y$ are aliased at $t$, denoted by $x \alias{t} y$. The superscript $t$ may be omitted when the context is clear.
		\item For an lvalue $x$, an expression $e$, and time $t$, denote by $x \assigned{t} e$ that $x$ is assigned with $e$ at time $t$ (may come with an implicit casting). Similarly, $t$ may be omitted.
		\item For an expression $x$, denote by $\t{const}(x)$ and $\t{obliv}(x)$ its obliviousness and constantness, whose value is either 0 or 1.
	\end{itemize}
\end{notation}
\fi

Now we specify the \olevel rules to satisfy \cref{thm:dataflow}. The danger in \cref{example:dangerous-assignment} occurs because an oblivious expression \t{**ppc} aliases with a non-oblivious expression \t{public\_c}, but \t{*ppc} is not constant and can be assigned a secret value which flows to the non-oblivious \t{public\_c}. To avoid this, we restrict ``non-oblivious to oblivious'' assignments by ensuring the ancestors in the pointer hierarchy are constant:

\begin{myrule}[Casting]\label{rule:casting}\normalshape
	Consider two expressions, $a$ and $b$. $b$ can be implicitly cast to $a$ if and only if, in addition to C++'s cast rules, the following \emph{oblivious cast rules} hold: \begin{enumerate}
		\item For a $k > 0$, if $\t{const}(a \down{k}) = 0$, for any $k' \geq k$, \begin{align}
			      \t{const}(a \down{k'}) & = \t{const}(b \down{k'}) \label{eq:equal-const}  \\
			      \t{obliv}(a \down{k'}) & = \t{obliv}(b \down{k'}). \label{eq:equal-obliv}
		      \end{align}
		\item For any $k \geq 0$, \begin{align} \label{eq:rule:casting:2}
			      \t{obliv}(a \down{k}) \geq \t{obliv}(b \down{k}).
		      \end{align}
	\end{enumerate}

	For an assignment $a \gets b$, if $a$ is a reference type, the assignment is allowed when $b$ can be implicitly converted to the \olevel and type of $a$ according to the above casting rules. If $a$ is a value type, $b$ only needs to be converted to the \olevel and type of constant $a$, i.e., with $\t{const}(a)$ set to $1$ but other \olevel and type unchanged.
\end{myrule}

Intuitively, condition 2 in \cref{rule:casting} prevents assigning oblivious values to non-oblivious targets, e.g., assigning an oblivious $b$ to a non-oblivious $a$.
Condition 1 parallels C++'s implicit conversion rule for \t{const}-modifiers~\cite[Chapter 7.3.6]{StandardDraftSources2022}, preventing inherent-constant memory from being overwritten. When $\t{const}(a\down k) = 0$ and $\t{obliv}(a \down {k'}) \neq \t{obliv}(b \down{k'})$ for some ${k'} \geq k$, assigning an oblivious value to $a \down{k'}$ would also assign it to the non-oblivious $b \down {k'}$, violating the dataflow theorem. 
\ifcr\else
We prove the soundness of \cref{rule:casting} in \cref{sec:soundness}.
\fi

To satisfy \cref{thm:dataflow-dep}, we have:

\begin{myrule}[Arithmetic]\label{rule:arith}\normalshape For an arithmetic operator expression, if any operands are oblivious, the total expression is also oblivious. Note that we only consider arithmetic expressions; comma expressions, compilation-time constant expressions (e.g., \t{sizeof}, \t{alignof}), and pointer reference/dereference operators are not considered dependencies.
\end{myrule}

\begin{myrule}[Dereference]\label{rule:deref}\normalshape It is forbidden to dereference an oblivious pointer (the array indexing expression \t{p[i]} is interpreted as \t{*(p + i)}). The result of a dereference operator shifts the \olevel right by one digit, while the result of a take-address operator shifts the \olevel left by one digit.
\end{myrule}

\begin{myrule}[Pointer Arithmetic]\label{rule:pointer-arith}\normalshape
	For the arithmetic of pointers, the only allowed type is the addition of a pointer \t{p} and an integer \t{i} (including the equivalent \t{\&p[i]}). The 0th digit of the result \olevel is the logical-OR of the operands' obliviousnesses, and the remaining digits are the same as \t{p}.
\end{myrule}

These rules ensure the theorems stated earlier. Note that \cref{rule:casting,rule:arith,rule:deref} replace their ``basic'' versions (\cref{rule:basic-casting,rule:basic-arith,rule:basic-deref}). The following example demonstrates how oblivious assignment rules work:

\begin{example}\label{example:assignment-rule}
	~
	\begin{minted}{cpp}
    // [[obliv]] is the abbreviation of [[obliv("1")]]
    [[obliv]] int x = 1;
    [[obliv("10")]] int *xp = &x;
    [[obliv("100")]] int **xpp = &xp;
    // error, 2nd digit of olevel differs
    [[obliv("110")]] int **ypp = xpp;
    // ok, 2nd level is constant
    [[obliv("110")]] int *const *const ypp = xpp;  
    // ok, the first level is passed by value
    [[obliv("101")]] int **ypp = xpp;        
\end{minted}
\end{example}\nopagebreak

\subsection{Function Calls}
\label{sec:semantics:ofunc}

Since callers typically only see function declarations in headers while their implications could exist in other translation units, obliviousness must be expressed in function signatures. We interpret function calls as series of assignments:

\begin{myrule}[Function Call]\label{rule:function} \normalshape
	When checking a function, the following rules should be enforced:
	\begin{enumerate}
		\item Each parameter of a function can be annotated with an \olevel.
		\item When checking the function body, the parameters are treated as variables with their annotated \olevel{}s.
		\item The function itself can be annotated with an \olevel, which indicates the \olevel of its return value.
		\item When checking the function body, every expression returned in a \t{return} statement should be able to be cast to the annotated \olevel of the function, according to \cref{rule:casting}.
		\item When calling a function, for each parameter, the given argument should be able to be cast to the \olevel of the declared parameter, according to \cref{rule:casting}.
	\end{enumerate}
\end{myrule}

%
%
%

\subsection{Classes}
\label{sec:semantics:complex-type}

C++ classes (\t{struct}, \t{union}) declare compound types with methods, inheritance, and polymorphism, requiring additional rules:

\begin{myrule}[Class]\label{rule:compound}\normalshape
	For a class type (\t{struct} and \t{union} types also included), the following rules are enforced:
	\begin{enumerate}
		\item In the declaration of a class type, its members can be annotated with an \olevel. The difference from a normal \olevel declaration is that the last digit of class member \olevel can be \t{"x"} besides \t{0} and \t{1}. If not specified, a class member has an \t{"x"} \olevel.
		\item In a class member access expression (e.g., \t{c.member} or \t{c->member}), if the \olevel of the member declaration ends with \t{"x"}, the \olevel of the expression is the \olevel of the member declaration with the last digit changed to the \olevel of the class instance (\t{c} or \t{*c}). Otherwise the \olevel of the expression is the same as the member declaration.
		\item A method can be annotated with \t{obliv\_this}. In the presence of this annotation, \t{this} is interpreted as a pointer to an oblivious instance. An oblivious instance can only call \t{obliv\_this} methods. A public instance can call all methods.
		\item When a method is overridden during inheritance, the \olevel annotations should be identical to the annotations in the original declaration.
		\item Virtual methods should not be called from a pointer to an oblivious polymorphic pointer.
	\end{enumerate}
\end{myrule}

By default, a member's obliviousness inherits from the instance (\olevel ending with \t{"x"}), analogous to C++'s \t{const}/\t{mutable} distinction. Virtual calls on oblivious polymorphic pointers are forbidden as the VPTR would be oblivious.

\subsection{Templates}
\label{sec:semantics:templates}

A single template could generate multiple instantiations. Since template parameters may carry \olevel information (via function return values or parameters), \oclang checks each instantiation rather than the template declaration itself. This is straightforward as \clang generates separate AST for each instantiation.


%
%

\subsection{Unsafe Code Blocks}
\label{sec:semantics:unsafe}

We provide the \t{obliv\_unsafe} attribute to mark code blocks as unsafe, bypassing obliviousness checks. This approach, inspired by Rust's \t{unsafe} keyword, is necessary in several cases:
\begin{itemize}
	\item Inspecting internal secret values when debugging;
	\item Implementing subroutines whose obliviousness requires sophisticated mathematics beyond automatic checking (e.g., extracting public values from oblivious variables through encryption/hashing with secret keys, or asserting on execution paths that should not trigger);
	\item Working with third-party libraries whose implementations are in separate translation units invisible to the compiler.
\end{itemize}

Like Rust's unsafe blocks, this feature indicates areas requiring special programmer attention and human auditing. We anticipate that in the future, more expressive semantics will reduce the need for unsafe blocks.

\subsection{Known Limitations}
\label{sec:semantics:discussions}

\ifcr
When using external libraries, the compiler cannot determine the obliviousness of symbols without \olevel{} annotations in library headers. Programmers must add these annotations, though this is typically a moderate burden.
Another limitation involves function overloading. Since \olevel is not part of C++ types, overloading by \olevel alone is prohibited unless encoded in the ABI.

\else
When using external libraries, since the compiler can only see their header files, it cannot determine the obliviousness of external symbols unless we have the header file annotated with \olevel{}s. To use external libraries to process objects with \olevel{}s, the headers must be modified with appropriate annotations---an unavoidable effort for programmers. Fortunately, this typically only involves adding \olevel annotations into the headers, a moderate burden.

Another limitation of \oclang involves function overloading. C++ allows multiple functions with the same name but different argument types, which will be given distinct mangled names during compilation. Since the \olevel is not part of the C++ types, declaring overloaded functions with identical argument types but different \olevel{}s is prohibited unless the ABI encodes \olevel{}s in the mangled names. For external libraries supporting arguments with different \olevel{}s, multiple wrapper functions with different names may be required.

\fi

\section{Implementation}
\label{sec:impl}

\oclang is implemented in approximately 800 lines of C++ code. It registers an annotation handler to parse \t{obliv}-related annotations and a frontend plugin that executes after \clang parses the source code into an AST. The plugin scans the AST and raises compilation errors for code violating our rules.

\oclang works like other \clang plugins: it compiles into a dynamic library loaded during compilation, requiring only one additional compiler option to load the plugin. Our implementation targets a wide range of LLVM monorepo versions (tested on both LLVM 13 and LLVM 20). \clang's AST data structure remains relatively stable, so minimal maintenance is needed to support newer versions.

%% file: chapters/30-soundness.tex
\ifcr\else

\section{Soundness}
\label{sec:soundness}

In this section, we aim to prove the soundness of the \oclang design described in \cref{sec:semantics}. That is to prove, if a program passes the check of \oclang according to all the rules specified in \cref{sec:semantics} (except \cref{rule:basic-casting,rule:basic-arith,rule:basic-deref} that have been replaced), it should be considered safe.

\subsection{Execution Model}

Before discussing the soundness, it is necessary to accurately specify well-defined semantics for programs. For C++, there is an ISO specification~\cite{StandardDraftSources2022} curated by the C++ Standard Committee that stipulates the behaviors of C++ programs. In the specification, the available memory consists of one or more sequences of contiguous bytes. The behavior of a C++ program is described as \emph{creating, destroying, referring to, accessing, and manipulating objects}. An object occupies a region of memory (maybe empty) within its lifetime. An object may contain other objects, called \textit{subobjects}, in three forms: member subobjects, base class subobjects, and array subobjects. An object that is an array of \t{unsigned char} and \t{std::byte} can provide storage for other objects. The execution of a C++ program is interpreted as a sequence of expression evaluations. A list of rules are specified to define the relation between expressions and subexpressions and restrict the order of evaluations.

However, the specification only cares about ``observable'' effects, such as I/O and the final state, but leaves out the execution process that leads to the final state. Since we focus on analyzing the information leaked during the execution process, additional assumptions are required to specify these behaviors. Recall that the program execution is interpreted as a sequence of evaluations. Here, we classify the evaluation steps into the following categories:
\begin{itemize}
  \item Create an object, implicitly or explicitly, which is evaluated on each object declaration. Note that this only includes the action of allocating memory and does not include initialization.
  \item Free an object, which terminates the lifetime of an object. Similarly, the execution of the destructor is not included.
  \item Read an object, which evaluates a reference to an object.
  \item Write an object, which evaluates an assignment to an object.
  \item Evaluate an expression, where the expression can be an operator expression, a function-call expression, an implicit/explicit casting expression, etc. All sub-components of the expression should be already evaluated. Note that all syntax sugars are desugared. For example, templates and overloading are resolved, operator expressions are resolved to the corresponding function calls if they are not built-in, and constructors and destructors of temporary objects are also resolved to the corresponding function calls.
\end{itemize}

For our purpose, we stipulate that any evaluation steps come with an evaluation trace. Two sequences of evaluation steps would produce the same evaluation trace if they (1) perform the same kind of operation; and (2) lie in the same memory position if the evaluation is creating, freeing, reading, or writing an object.
Furthermore, we require that the evaluation order does not depend on the values of the operands. Recall that in C++, the evaluation order of operands is an unspecified behavior. For example, in an expression \t{f(x) + g(y)}, the compiler can arbitrarily swap the order that \t{f} and \t{g} are executed. We also grant the compiler the freedom to choose the order, but it should not depend on the operand values.

We only consider programs with well-defined behaviors, i.e. without undefined behaviors (UBs). The C++ standard gives rigorous definitions of UBs, including pointer reinterpretation, out-of-bound memory accesses, pointer use-after-free, etc. The compilers do not give any guarantee for a program with UBs, and neither does \oclang. Fortunately, the functionality to detect UBs is already included in most C++ compilers and sanitizers~\cite{clangUndefinedBehaviorSanitizerClang162022, llvmAvailableCheckers2022}, so we can rely on them to report errors when encountering UBs.

\subsection{Soundness Proof}

To prove the soundness of our tool, we need to prove that any program that passes the check rules of \oclang should satisfy \cref{thm:sec,thm:dataflow,thm:dataflow-dep}. It should be noticed that these theorems are independent. \cref{thm:sec} prevents the secrets from being leaked via the execution trace; \cref{thm:dataflow} prevents programmers from leaking the secrets ``unexpectedly''; \cref{thm:dataflow-dep} prevents leaking the secrets with computations.

\cref{thm:sec} clearly holds. According to our C++ modeling, given the program content, the execution trace is solely determined by the values in the conditional clauses and the pointer dereferences (including array accesses). At the same time, all the values in conditional clauses and array indexing are public values, according to \cref{rule:condition,rule:deref}. Therefore, the execution trace can be simulated solely with public objects.

\cref{thm:dataflow-dep} is guaranteed by the \olevel deduction rule of C++ operators, described in \cref{rule:arith}.

The non-trivial part is \cref{thm:dataflow}. It requires knowledge about in which cases two lvalues can aliasing with each other. We first have the following lemma.

\begin{lemma}\label{lem:invariant}
  For any expression $a$, denote by $K(a)$ the minimum positive integer $k$ such that $\t{const}(a \down{k}) = 0$ (let $K(a) = +\infty$ if such $k$ does not exist). If $a \alias{t} b$, and $b$ is a primitive expression, then at time $t$,
  \begin{align}
    \forall\ k' \geq K(a),\quad & \begin{cases}
                                    \t{const}(a \down{k'}) & = \t{const}(b \down{k'})  \\
                                    \t{obliv}(a \down{k'}) & = \t{obliv}(b \down{k'}),
                                  \end{cases} \\
    \forall\ k' \geq 0,\quad    & \t{obliv}(a \down{k'}) \geq \t{obliv}(b \down{k'})
  \end{align}

  Note that the above relations imply that for any $k' \geq 0$, $\t{const}(a \down {k'}) \geq \t{const}(b \down {k'})$
\end{lemma}
\begin{proof}
  Assume there exist some tuples $(a, b, t)$ such that $a \alias{t} b$, $b$ is primitive, but $K(a)$ does not satisfy the relations in the lemma. Among all such tuples, pick the tuple $(a, b, t)$ with the minimum $t$. Clearly $a \neq b$, otherwise for any $k' \geq 0$, $\t{const}(a \down{k'}) = \t{const}(b \down{k'})$ and also $\t{obliv}(a \down{k'}) = \t{obliv}(b \down{k'})$. The lemma is satisfied.

  Notice that the relations given by the lemma only depend on the types of $a$ and $b$, so these relations do not vary by time given $a$ and $b$, and they also do not hold at time $t' = t - 1$. But since $t$ is the minimum time that the lemma fails, the only possibility is that at time $t'$, the precondition does not hold, i.e., we have $a \not\alias{t'} b$, and an assignment transforms non-aliasing $a$ and $b$ into aliasing. For this to happen, there exist some expressions $c$ and $d$, such that at time $t'$ for some $k > 0$
  \begin{align}
    c \assigned{t'} d, \quad
    a \,\up{k} \alias{t'} c, \quad
    d \,\down{k} \alias{t'} b
  \end{align}

  Because $a \,\up{k} \alias{t'} c$, there exists some primitive $p$ such that
  \begin{align}
    a\,\up{k} \alias{t'} p \label{eq:alias-ak-p} \\
    c \alias{t'} p \label{eq:alias-c-p}
  \end{align}

  Since $t$ is the minimum time that the lemma fails for any tuple $(a, b, t)$, at time $t'$, $K(a \up k)$ and $K(c)$ satisfy the lemma.

  According to \cref{eq:alias-ak-p} we have
  \begin{align}
    \forall\ k' \geq K(a \up k), \; & \begin{cases}
                                        \t{const}(a \down{k' - k}) & = \t{const}(p \down{k'})  \\
                                        \t{obliv}(a \down{k' - k}) & = \t{obliv}(p \down{k'}),
                                      \end{cases} \label{eq:ap_equal}                  \\
    \forall\ k' \geq 0, \;          & \t{obliv}(a \down{k' - k}) \geq \t{obliv}(p \down{k'}) \label{eq:ap_ob}
  \end{align}

  We rewrite \cref{eq:ap_equal} into
  \begin{equation}
    \forall\ k' \geq K(a \up k) - k, \;
    \begin{cases}
      \t{const}(a \down{k'}) & = \t{const}(p \down{k'+k}) \\
      \t{obliv}(a \down{k'}) & = \t{obliv}(p \down{k'+k})
    \end{cases} \label{eq:ap_equal_2}
  \end{equation}

  Since $c$ is assigned with $d$, $c$ is not constant, i.e. $K(c) = 0$. According to \cref{eq:alias-c-p} we have
  \begin{align}
    \forall\ k' \geq 0, \; & \begin{cases}
                               \t{const}(c \down{k'}) & = \t{const}(p \down{k'}) \\
                               \t{obliv}(c \down{k'}) & = \t{obliv}(p \down{k'})
                             \end{cases} \label{eq:cp_equal}
  \end{align}

  Similarly, since $d \,\down{k} \alias{t'} b$ and $b$ is primitive,
  \begin{align}
    \forall\ k' \geq K(d \down{k}),\; & \begin{cases}
                                          \t{const}(d \down{k + k'}) = \t{const}(b \down{k'}) \\
                                          \t{obliv}(d \down{k + k'}) = \t{obliv}(b \down{k'}),
                                        \end{cases} \label{eq:bd_equal}                    \\
    \forall\ k' \geq 0, \;            & \t{obliv}(d \down{k + k'}) \geq \t{obliv}(b \down{k'}\label{eq:bd_ob})
  \end{align}

  Consider the implicit conversion during the assignment $c \assigned{t'} d$. According to the rule of assignment (\cref{rule:casting}), and notice that $\t{const}(c \down{K(c \down1) + 1}) = 0$ by definition, we have
  \begin{align}
    \forall\ k' \geq K(c \down1) \!+\! 1, & \begin{cases}
                                            \t{const}(c \down{k'}) = \t{const}(d \down{k'}) \\
                                            \t{obliv}(c \down{k'}) = \t{obliv}(d \down{k'}),
                                          \end{cases} \label{eq:cd_equal}                    \\
    \forall\ k' \geq 0, \;              & \t{obliv}(c \down{k'}) \geq \t{obliv}(d \down{k'}\label{eq:cd_ob})
  \end{align}

  We rewrite \cref{eq:cd_equal} into
  \begin{equation}
    \forall\ k' \geq K(c \down1) - (k - 1),\;
    \begin{cases}
      \t{const}(c \down{k'+k}) = \t{const}(d \down{k'+k}) \\
      \t{obliv}(c \down{k'+k}) = \t{obliv}(d \down{k'+k})
    \end{cases} \label{eq:cd_equal_2}
  \end{equation}

  Letting $k' = 1 + K(c \down 1)$ in \cref{eq:cd_equal}, we have \begin{align}
    \t{const}(d\down{1 + K(c \down 1)}) = \t{const}(c \down{1 + K(c \down 1)}) = 0.
  \end{align}
  Hence \begin{align}
    K(d \down 1) \leq K(c \down 1) \label{eq:kd1_leq_kc1}
  \end{align}
  Similarly letting $k' = K(c \down k)$ in \cref{eq:cd_equal_2}, we have \begin{align}
    K(c \down k) \geq K(d \down k). \label{eq:kck_geq_kdk}
  \end{align}

  We know that \begin{align} \label{eq:ka_geq}
    K(a) \geq K(a \up k) - k.
  \end{align}

  \newcommand{\explain}[1]{\text{#1}}
  For any $k' \geq 0$,
  \begin{align}
    \t{const}(a \down{k'})
     & \geq \t{const}(p \down{k + k'}) & \label{eq:const-a-c-1}                               \\
     & = \t{const}(c \down{k + k'})    & \explain{by \cref{eq:cp_equal}} \label{eq:const-a-c}
  \end{align}
  The inequality in \cref{eq:const-a-c-1} holds because, when $k' \geq K(a \up k) - k$, the two sides are equal by \cref{eq:ap_equal_2}; when $k' < K(a \up k) - k \leq K(a)$, the left side is always $1$ by the definition of $K(a)$.

  Hence by letting $k' = K(a)$ in \cref{eq:const-a-c}, we have \begin{align}
    K(a) \geq K(c \down k) \geq K(c \down1) - (k-1) \label{eq:ka_kc1}
  \end{align}
  And also \begin{align}
    K(a) \geq K(c \down k)
     & \geq K(d \down k) & \explain{by \cref{eq:kck_geq_kdk}} \label{eq:ka_kd}
  \end{align}

  Now for any $k' \geq K(a)$,
  \begin{align}
    \t{obliv}(a \down{k'})
     & = \t{obliv}(p \down{k + k'}) & \explain{by \cref{eq:ap_equal_2,eq:ka_geq}} \\
     & = \t{obliv}(c \down{k + k'}) & \explain{by \cref{eq:cp_equal}}             \\
     & = \t{obliv}(d \down{k + k'}) & \explain{by \cref{eq:cd_equal_2,eq:ka_kc1}} \\
     & = \t{obliv}(b \down{k'})     & \explain{by \cref{eq:bd_equal,eq:ka_kd}}
  \end{align}

  Similarly for any $k' \geq K(a)$, \begin{align}
    \t{const}(a \down{k'}) = \t{const}(c \down{k'})
  \end{align}

  For any $k' \geq 0$, \begin{align}
    \t{obliv}(a \down{k'})
     & \geq \t{obliv}(p \down{k + k'}) & \explain{by \cref{eq:ap_ob}}    \\
     & = \t{obliv}(c \down{k + k'})    & \explain{by \cref{eq:cp_equal}} \\
     & \geq \t{obliv}(d \down{k + k'}) & \explain{by \cref{eq:cd_ob}}    \\
     & \geq \t{obliv}(b \down{k'})     & \explain{by \cref{eq:bd_ob}}
  \end{align}

  Therefore $a$ and $b$ satisfy the relations in the lemma, leading to a contradiction.
\end{proof}

\begin{corollary}\label{cor:non-const}
  If $a \alias{t} b$ and $b$ is not constant, then $\t{obliv}(a) \geq \t{obliv}(b)$.
\end{corollary}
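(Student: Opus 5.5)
The plan is to route both lvalues through a common primitive lvalue and then apply \cref{lem:invariant} twice. In a well-formed program every lvalue aliases a primitive lvalue. So from $a \alias{t} b$ I would fix a primitive $p$ with $a \alias{t} p$ and $b \alias{t} p$; aliasing at a fixed time is just equality of memory cells, so it is transitive. If $b$ is itself primitive, take $p = b$.

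\textbf{Step 1 (the $a$ side).} Apply \cref{lem:invariant} to the pair $(a, p)$. Its second relation at $k' = 0$ gives $\t{obliv}(a) \geq \t{obliv}(p)$ immediately, with no information about constness needed.

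\textbf{Step 2 (the $b$ side).} Here I need the reverse inequality $\t{obliv}(b) \leq \t{obliv}(p)$. In fact I expect equality, because $b$ is not constant. The intended argument is the one already used inside the proof of \cref{lem:invariant} for the assignee $c$: since $\t{const}(b) = 0$, the first (equality) block of the lemma applies from level $0$, i.e.\ effectively $K(b) = 0$. This yields $\t{const}(b) = \t{const}(p)$ and $\t{obliv}(b) = \t{obliv}(p)$. Chaining with Step 1 then gives $\t{obliv}(a) \geq \t{obliv}(p) = \t{obliv}(b)$.

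\textbf{Main obstacle.} Step 2 is the hard part, because the lemma as stated defines $K(\cdot)$ as the minimum \emph{positive} $k$ with $\t{const}(\cdot\down{k}) = 0$. Read literally, the lemma therefore gives only $\t{obliv}(b) \geq \t{obliv}(p)$ at level $0$, which points the wrong way.

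I would first justify the strengthened reading: if $\t{const}(x) = 0$ for an lvalue $x$ aliasing a primitive $q$, then the equalities of the lemma hold for all $k' \geq 0$. To prove it, I would rerun the minimal-counterexample induction of \cref{lem:invariant}, now with level $0$ included whenever $x$ is non-constant. The only new ingredient is the reference-binding case of \cref{rule:casting}. There, a non-constant reference target $x$ forces the source, and hence by induction the underlying primitive, to have the same \olevel at level $0$; this is exactly condition 1 of \cref{rule:casting} extended to $k = 0$ for reference types.

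If this extension turned out to be awkward to formalise, the fallback is a different route for Step 2. I would observe that any non-constant lvalue $b$ is either primitive, in which case Step 1 alone finishes the proof, or is of the form $x\down{j}$ with $j > 0$ and $\t{const}(x\down{j}) = 0$. In the latter case $K(x) \leq j$, so the lemma applied to $x$ and its primitive $q$ at level $k' = j \geq K(x)$ gives the required equality $\t{obliv}(b) = \t{obliv}(q\down{j})$. I would then identify $q\down{j}$ with the primitive $p$ through the shared memory cell.
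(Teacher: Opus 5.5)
Your Steps~1--2 are exactly the paper's proof: it fixes a common primitive $p$, sets $K(b)=0$ because $b$ is non-constant, and reads off $\t{obliv}(b)=\t{obliv}(p)\le\t{obliv}(a)$ from \cref{lem:invariant}. In doing so it tacitly uses the nonnegative version of $K$ (as it also does with $K(c)=0$ inside the lemma's proof), which is precisely the discrepancy you flag, and your strengthened lemma---with condition~1 of \cref{rule:casting} read at level $0$ for reference binding---is the appropriate repair.

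Drop the fallback, though. Equating $\t{obliv}(q\down{j})$ with $\t{obliv}(p)$ ``through the shared memory cell'' is itself the level-$0$ claim for the non-constant lvalue $q\down{j}$ aliasing the primitive $p$. For that claim the positive-$K$ lemma again yields only $\t{obliv}(q\down{j})\ge\t{obliv}(p)$, so the route is circular.
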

\begin{proof}
  Since $a \alias{t} b$, assume $a$ and $b$ are aliasing with a common primitive $p$. Since $b$ is not constant, $K(b) = 0$. According to \cref{lem:invariant}, $\t{obliv}(b) = \t{obliv}(p)$ and $\t{obliv}(a) \geq \t{obliv}(p)$. Hence $\t{obliv}(a) \geq \t{obliv}(b)$.
\end{proof}

Now we can establish \cref{thm:dataflow}. Assume the contrary that there is an assignment $a \assigned{t} b$ where $b$ is an oblivious expression, and $a$ is aliasing with a non-oblivious expression $a'$. According to \cref{eq:rule:casting:2} in \cref{rule:casting}, $a$ must be oblivious because it is assigned with an oblivious expression $b$. Now that $a$ is assigned, it cannot be constant. According to \cref{cor:non-const}, $\t{obliv}(a') \geq \t{obliv}(a)$, hence $a'$ is also oblivious, leading to a contradiction. The proof of \cref{thm:dataflow} is complete.

\fi

%% file: chapters/40-implementation.tex
\section{Compiler Optimization Suppression}
\label{sec:suppress}

Modern compilers perform aggressive performance optimizations that reorganize program structures. While \oclang ensures obliviousness at the AST level, the LLVM optimizer may apply transformations that undermine these guarantees.

Consider this example from \cite{sprenkelsLLVMProvidesNo2019}:
\begin{minted}{cpp}
    // Select an element from an array in constant time
    uint64_t constant_time_lookup(
            const size_t secret_idx, const uint64_t table[16]) {
        uint64_t result = 0;
        for (size_t i = 0; i < 16; i++) {
            const bool cond = i == secret_idx;
            const uint64_t mask = (-(int64_t)cond);
            result |= table[i] & mask;
        }
        return result;
    }
\end{minted}

This function uses bit operations to extract \t{table[secret\_idx]}, without revealing \t{secret\_idx} through side channels. However, LLVM's optimizer can detect this pattern and bypass the protection. When compiled with \clang 21.1.0 targeting x86-64 Linux using \t{-O3 -fno-unroll-loops -fno-vectorize}, the compiler generates a branch depending on whether \t{i} equals to \t{secret\_idx}, allowing attackers to learn information about \t{secret\_idx} by observing control flow, violating the intended obliviousness.

This issue arises because current compiler optimizers prioritize correctness without considering obliviousness. Reconstructing LLVM to be obliviousness-aware would require enormous engineering effort.
Instead, we adopt a simpler approach. When our plugin loads, it disables optimization passes that could produce unsafe changes. We manually inspected all \clang optimization passes to determine their safety. Here is the list: ReassociatePass, LoopInstSimplifyPass, LoopSimplifyCFGPass, LoopRotatePass, IndVarSimplifyPass, LoopDeletionPass, LoopInterchangePass, LoopFlattenPass, X86CmovConverterPass.

\ifcr
To ensure the completeness of necessary passes to disable, \oclang provides an extra double-check step that applies rigorous obliviousness verification to the generated assembly code using existing tools~\cite{disselkoenFindingEliminatingTiming}.
The performance impact of disabling these optimization passes is insignificant, as evaluated in \cref{sec:eval:optim-perf}. 
\else
However, this approach is imperfect. On one hand, by disabling compiler optimization passes, there may be negative performance impact in the resultant code. On the other hand, manual inspection may not discover all relevant passes that need to be disabled, potentially leaving security vulnerabilities.

For the performance impact, we evaluate the impact in \cref{sec:eval:optim-perf} using representative programs. We find there will be only minor slowdown for most programs, and moderate performance degradation (up to 30\%) for complex cases. We regard such slowdown as unavoidable cost to achieve security, and believe the small overheads are acceptable in reality.

For stronger security guarantees, we provide an extra double-check step in our tool. With this step enabled, \oclang applies rigorous obliviousness verification to the generated assembly code using existing tools. Our current implementation uses pitchfork~\cite{disselkoenFindingEliminatingTiming}.
If the verification fails, the compiler falls back to recompile the translation unit with the failed functions, using the \t{optnone} (no optimization) option to avoid any problematic compiler optimizations.
\fi

%% file: chapters/41-evaluation.tex
\section{Evaluation}
\label{sec:eval}

\ifcr
We evaluate \oclang's expressiveness through case studies, and assess its performance through compilation and runtime benchmarks.
\else
In this section, we evaluate \oclang from two aspects in order to show that \oclang (1) is sufficiently expressive and flexible to support non-trivial oblivious algorithm implementations; (2) has minor overheads during compilation and leads to good performance for the compiled programs.
\fi

\subsection{Case Studies}\label{sec:eval:expr}

\ifcr
\textbf{PathORAM}\label{sec:eval:expr:pathoram}
PathORAM~\cite{stefanov2018path} provides an oblivious random access interface with $O(\log^3 N)$ overhead. Our non-recursive implementation is 251 lines of code (LOCs).
The position map requires \t{obliv\_unsafe} blocks since positions read from the secret map must become public for tree traversals.

\else

To demonstrate the expressiveness of \oclang, we have implemented several oblivious algorithms in C++ and used \oclang to process them. All the implementations can successfully pass the check of \oclang.

\textbf{PathORAM}\label{sec:eval:expr:pathoram}
ORAM is a type of protocol that provides an array-like random access interface but keeps all addresses, data, and operations (read or write) oblivious. It is a useful and general primitive since random array access is a frequently used programming pattern. PathORAM~\cite{stefanov2018path} is a simple and efficient ORAM implementation with $O(\log^3 N)$ time overhead for each access, where $N$ is the array length.

To show \oclang's expressiveness, we write a non-recursive local version of PathORAM, 251 lines of code (LOCs). The implementation is wrapped in a template class with the following public interface:
\begin{minted}{cpp}
    #define OBLIV [[obliv]]
    #define OBLIV_PTR [[obliv("10")]]

    template<typename DataType>
    class PathORAM {
    public:
        PathORAM(size_t N, size_t stashLen);
        ~PathORAM();
        DataType Access(OBLIV AddrType addr, OBLIV DataType data, OBLIV bool isWrite);
    }
\end{minted}

In writing oblivious algorithms, \t{cmove} is a key utility that obliviously copies bytes based on a condition. We implement it via bit operations for cross-platform compatibility.
\begin{minted}{cpp}
    template <typename T>
    void cmove(OBLIV bool cond, OBLIV_PTR const T *src, OBLIV_PTR T *dst);
\end{minted}

The tricky part is handling PathORAM's position map structure. Recall that PathORAM stores data in a binary tree, where each data element is assigned with an position corresponding to a path in the binary tree. On each access to PathORAM, the position corresponding to the accessed address is read out, and then re-assigned to a freshly sampled random value.
The map between data logical addresses and tree path positions, namely the position map, is either stored in an array and accessed with brute-force scanning, or stored in another smaller PathORAM recursively. The content of the position map should be annotated as secret since an attacker aware of the position map can infer the address being accessed from the memory trace. However, when a position is read out from the position map, it should become a public value since it is used to access memory later. To bridge the gap, we should wrap the subroutine of accessing the position map with an unsafe block as below. The security here is guaranteed by the PathORAM theory beyond simple oblivious accesses.
\begin{minted}{cpp}
template<typename DataType>
AddrType PathORAM<DataType>::accessPosMap(OBLIV size_t addr, OBLIV size_t data) {
    OBLIV_UNSAFE { /* return the position and insert random value */ }
}
\end{minted}

In our current implementation, there are some other uses of unsafe blocks, but they could be optionally removed. One usage is the assertion of failure. PathORAM is proved to fail with a negligible probability under properly chosen parameters. An assertion can be used to abort the program when it fails. However, since the failure would expose secret information, the assertion code should be wrapped in an unsafe block. Another usage is the call of library functions, such as \t{std::memcpy}, which can be avoided with hand-written counterparts but with inferior performance.

\fi

\textbf{Oblivious BFS}\label{sec:eval:expr:obfs}
Oblivious BFS~\cite{blantonDataobliviousGraphAlgorithms2013} performs BFS traversals such that an attacker cannot learn the graph structure from the execution trace. It shuffles the graph randomly and chooses each next vertex with equal probability. Our implementation uses 158 LOCs.

\textbf{Real-World Cryptographic Vulnerabilities}\label{sec:eval:expr:cve}
\ifcr
CVE-2024-13176 (the Minerva attack) is a timing side channel in OpenSSL's ECDSA: a $\sim$\SI{300}{\ns} signal leaks when the inverted nonce's top word is zero, enabling private key recovery on NIST P-521. The root cause was that \texttt{BN\_mod\_exp\_mont()} lacked constant-time protection for its secret base and result, despite protecting the exponent. With \oclang, this flaw is caught automatically:
\begin{minted}{c}
int BN_mod_exp_mont(BIGNUM *r, const BIGNUM *a, const BIGNUM *p,
                    const BIGNUM *m, BN_CTX *ctx, BN_MONT_CTX *m_ctx);
static int ec_field_inverse_mod_ord(const EC_GROUP *group,
        OBLIV_PTR BIGNUM *r, OBLIV_PTR const BIGNUM *x, BN_CTX *ctx) {
    if (!BN_mod_exp_mont(r, x, ...))  // ERROR: OBLIV_PTR to non-OBLIV_PTR
        goto err;
}
\end{minted}
\else
CVE-2024-13176 (the Minerva attack) is a timing side channel in OpenSSL's ECDSA implementation: a $\sim$ \SI{300}{\ns} timing signal leaks when the top word of the inverted nonce is zero, allowing private key recovery on NIST P-521. The vulnerability occurs in \texttt{ec\_field\_inverse\_mod\_ord()}, which computes $k^{-1}$ using \texttt{BN\_mod\_exp\_mont()}.

The root cause was flawed developer reasoning. The original code commented: ``Exponent $e$ is public. No need for constant-time protection.'' While OpenSSL's \texttt{BN\_FLG\_CONSTTIME} flag protects the exponent, it ignores the secret \emph{base} $x$ (the nonce) and secret \emph{result} $r$ (the nonce inverse). The leak came from result normalization via \texttt{bn\_correct\_top()}, which adjusts bignum word count based on the actual value.

With \oclang, this flaw is caught automatically:
\begin{minted}{c}
int BN_mod_exp_mont(BIGNUM *r, const BIGNUM *a, const BIGNUM *p,
                    const BIGNUM *m, BN_CTX *ctx, BN_MONT_CTX *m_ctx);
static int ec_field_inverse_mod_ord(const EC_GROUP *group,
        OBLIV_PTR BIGNUM *r, OBLIV_PTR const BIGNUM *x, BN_CTX *ctx) {
    if (!BN_mod_exp_mont(r, x, ...))  // ERROR: OBLIV_PTR to non-OBLIV_PTR
        goto err;
}
\end{minted}
\fi
\oclang's explicit dataflow tracking catches secret values flowing into unannotated functions, demonstrating robustness over naming conventions alone.

\textbf{Others}
We have also tested \oclang on several other algorithms, including oblivious shuffle~\cite{ohrimenkoMelbourneShuffleImproving2014} and tree-based data structures~\cite{wang2014oblivious}. We also write some standard library wrappers to assist in using standard libraries. The adoption of these programs is straightforward. The only requirement is to mark variables and function parameters with appropriate \olevel{}s.

To quantify the migration effort, we measure the LOCs requiring \texttt{[[obliv]]} annotations compared to the total LOCs. For example, we annotated 32 out of 190 LOCs in PathORAM, 10 out of 94 LOCs in Oblivious BFS, and 25 out of 311 LOCs in Poly1305 (see later). This demonstrates that the annotation burden is modest --- typically just 8\% to 17\% of the codebase requires modifications, and these modifications are localized to variable declarations and function signatures involving secret data.

\subsection{Performance}

\ifcr
All experiments run an Intel Xeon Gold 5218R processor with \SI{256}{\giga B} DDR4 memory, with Ubuntu 22.04. Programs are compiled with \clang 13.0.1 at \t{-O3}; with \oclang, optimization passes mentioned in \cref{sec:suppress} are disabled.

\textbf{Compilation Overheads}\label{sec:eval:compiler-perf}
We evaluate compilation overheads on five benchmarks: \textsf{PathORAM} (251 LOCs, \cref{sec:eval:expr:pathoram}), \textsf{Oblivious BFS} (158 LOCs, \cref{sec:eval:expr:obfs}), \textsf{tiny\_jpeg} (1306 LOCs, JPEG encoding)~\cite{gonzalezTinyJPEG2022}, \textsf{monocypher} (3277 LOCs, cryptography library)~\cite{danielj.bernstein.Monocypher}, and \textsf{sqlite3} (250815 LOCs)~\cite{SQLiteHomePage2023}.
All programs are compiled (but not linked) with and without our \oclang plugin. Results are shown in \cref{fig:compile-time} (average of 10 runs after 3 warm-ups). Without the double-check step (\cref{sec:suppress}), overheads are below 11\%; with it, below 18\%. Overheads decrease for larger programs, making \oclang suitable for real-world codebases.

\else
In the following experiments, we evaluate the performance of \oclang from three aspects. First, we demonstrate that the check in \oclang introduces minor overheads to the compilation time. Second, we justify that the compiler optimization suppression in \oclang does not significantly degrade the performance of the compiled program. Finally, we compare \oclang with prior solutions to oblivious programming and show its superior performance.

All experiments use an Intel Xeon Gold 5218R machine with \SI{256}{\giga B} DDR4 memory, running Ubuntu 22.04. Single-threaded programs are compiled with \clang 13.0.1 at \t{-O3}; with \oclang, optimization passes mentioned in \cref{sec:suppress} are disabled.

\textbf{Compilation Overheads}\label{sec:eval:compiler-perf}
We evaluate compilation overheads on five benchmarks: \textsf{PathORAM} (251 LOCs, \cref{sec:eval:expr:pathoram}), \textsf{Oblivious BFS} (251 LOCs, \cref{sec:eval:expr:obfs}), \textsf{tiny\_jpeg} (1306 LOCs, JPEG encoding)~\cite{gonzalezTinyJPEG2022}, \textsf{monocypher} (3277 LOCs, cryptography library)~\cite{danielj.bernstein.Monocypher}, and \textsf{sqlite3} (250815 LOCs)~\cite{SQLiteHomePage2023}.

All these programs are compiled (but not linked) with and without our \oclang plugin. The compilation time slowdown results are shown in \cref{fig:compile-time}. We take the average time of 10 runs after 3 warm-up runs.
When not enabling the extra double-check step (\cref{sec:suppress}), the compilation time overheads are less than 11\% for all test cases. Even with the double-check, there are less than 18\% overheads.
Among different test cases, the overheads are smaller for large programs, in which the cost is dwarfed by that of other compilation actions. Therefore, \oclang is particularly suitable for real-world complex oblivious programs.
\fi

\begin{figure}[htbp]
	\centering
    \small
	\begin{subfigure}{0.6\textwidth}
		\begin{tikzpicture}
			\begin{axis}[
					height = 3.5cm,
					width=\textwidth,
					ybar = 0,
					enlarge x limits = 0.5,
					symbolic x coords = {W/o Double-Check, W/ Double-Check},
					legend columns = 3,
					legend style = {
							at={(0.5, 1.1)},
							anchor = south,
							font=\footnotesize,
						},
					xtick = data,
					ymin = 0,
					scaled y ticks=false,
					ylabel = {Slowdown (\%)},
				]
				\addplot coordinates {
						(W/o Double-Check, 10.41)
						(W/ Double-Check, 17.73)
					};
				\addplot coordinates {
						(W/o Double-Check, 7.68)
						(W/ Double-Check, 15.13)
					};
				\addplot coordinates {
						(W/o Double-Check, 5.75)
						(W/ Double-Check, 13.90)
					};
				\addplot coordinates {
						(W/o Double-Check, 1.03)
						(W/ Double-Check, 5.13)
					};
				\addplot coordinates {
						(W/o Double-Check, 0.59)
						(W/ Double-Check, 2.16)
					};
				\legend{PathORAM, Oblivious BFS, tiny\_jpeg, moncypher, sqlite3};
			\end{axis}
		\end{tikzpicture}
		\caption{Compilation time overheads.}
		\label{fig:compile-time}
		\vspace{-1.5em}
	\end{subfigure}
	\begin{subfigure}{0.35\textwidth}
		\begin{tabular}{lr}
			\toprule
			\bf Testcase  & \bf Slowdown \\
			\midrule
			PathORAM      & $29.5\%$     \\
			Oblivious BFS & $0.91\%$     \\
			tiny\_jpeg    & $1.27\%$     \\
			lz4           & $-2.72\%$    \\
			\bottomrule
		\end{tabular}
		\caption{Optimization suppression overheads.}
		\label{tab:overhead-optim-suppression}
	\end{subfigure}
    \vspace{1.5em}
	\caption{Overheads from compilation process and optimization suppression.}
\end{figure}


\textbf{Optimization Suppression Overheads}\label{sec:eval:optim-perf}
We measure execution time with and without suppression on: \textsf{PathORAM} with randomly generated operations on size $2^{20}$, \textsf{Oblivious BFS} on a 4096-node graph, \textsf{tiny\_jpeg} encoding a $1000\times 1000$ image, and \textsf{lz4}~\cite{LZ4ExtremelyFast} with compression level 3.
\cref{tab:overhead-optim-suppression} shows the results. Most benchmarks see negligible slowdown, with \textsf{lz4} even slightly faster. \textsf{PathORAM} slows down by 29.5\%, likely because the suppressed passes could reorganize its loops for better performance.


\ifcr
\textbf{Runtime Performance}
We compare \oclang against FaCT~\cite{cauligiFaCTDSLTimingsensitive2019b} (a DSL that verifies and transforms oblivious code) and Jasmin~\cite{almeidaJasminHighAssuranceHighSpeed2017} (a DSL with formally verified compilation) on six algorithms: \textsf{AES128}, \textsf{poly1305}, \textsf{salsa20}, \textsf{secretbox}, \textsf{siphash}~\cite{aumassonSipHashFastShortInput2012}, and \textsf{PathORAM}. 
Both FaCT and Jasmin lack usability features (e.g., limited loop constructs, manual register allocation), restricting comparison to simple microbenchmarks.
The results are shown in \cref{fig:perf-compare}.
\oclang is 27.8\% faster than Jasmin and 66.9\% faster than FaCT on average. For poly1305, the \oclang implementation is 29.8\% slower than Jasmin. The reason might be that the main body of poly1305 calculation is 128-bit integer arithmetic, and the code generation of 128-bit arithmetic in Jasmin is more optimized.

\else
\textbf{Runtime Performance Comparison}
Researchers have proposed some other solutions providing oblivious programming utilities. \cref{sec:related} summarizes these designs in great details. Compared with previous solutions, \oclang directly uses the middle-end and backend provided by the mature and heavily optimized \clang and LLVM frameworks, thus achieving superior performance in most cases.

To demonstrate the performance advantages, we consider the following two competitors.
FaCT~\cite{cauligiFaCTDSLTimingsensitive2019b} is a domain-specific language (DSL) that verifies and transforms the obliviousness of source code. It translates the source code into the LLVM IR and relies on \clang to compile to object files.
Jasmin~\cite{almeidaJasminHighAssuranceHighSpeed2017} is also a DSL that ensures the constant-time property. It has a formally verified frontend, optimizer, and codegen in order to translate source code into assembly files.

We evaluate the performance of the above tools and \oclang on six algorithms: \textsf{AES128}, \textsf{poly1305} (one-time authentication), \textsf{salsa20} (stream cipher), \textsf{secretbox} (authenticated encryption), \textsf{siphash} (pseudorandom function~\cite{aumassonSipHashFastShortInput2012}), and \textsf{PathORAM} (block eviction logic).
For the first three cryptogrpahic algorithms, the Jasmin implementation is from libjade~\cite{formosa-cryptoLibjade2023}, a cryptography library providing libsodium-like interfaces written in Jasmin. The implementations in \oclang and FaCT are directly translated from the implementation of libsodium~\cite{Libsodium}, except that the implementation of poly1305 in \oclang is translated from libjade. The implemtations of siphash and PathORAM are all hand-written that take identical control flow and operations. All these algorithms are used to process a 4096-byte message, and we measured the average execution time.

During the implementation, we found that both FaCT and Jasmin are lacking some usability features currently. For example, FaCT has no loop clauses other than ranged \t{for} loops over contingent integers; Jasmin requires manually allocating local variables to registers or the stack. Both of them cannot handle type conversion from boolean values to integers easily, and the compiler error messages are quite vague and hard to locate. These deficiencies make writing complicated programs on them difficult, if not impossible. Therefore, our comparison in this part is limited to a few simple microbenchmarks.

The evaluation results are shown in \cref{fig:perf-compare}.
Among these test cases, \oclang is 27.8\% faster than Jasmin and 66.9\% faster than FaCT on average. For poly1305, the \oclang implementation is 29.8\% slower than Jasmin. The reason might be that the main body of poly1305 calculation is 128-bit integer arithmetic, and the code generation of 128-bit arithmetic in Jasmin is more optimized.

\fi

\begin{figure}
  \centering
    \small
    \pgfplotstableread{
      Benchmark  oclang  Jasmin  FaCT
      AES128     1.381   1.419   1.727
      poly1305   1.684   1.182   2.278
      salsa20    3.747   7.009   7.745
      secretbox  6.785   8.555   10.771
      siphash    12.43   12.38   92.52
      PathORAM   51.03   122.88  210.92
    }\datatable

    \begin{tikzpicture}
      \begin{axis}[
          height = 4cm,
          width = 0.73\textwidth,
          ybar = 0.5,
          xtick = data,
          xticklabels from table = {\datatable}{Benchmark},
          xticklabel style={
              font=\sffamily,
            },
          legend style = {
              at = {(1.3, 1)},
              anchor = north,
            },
          scaled y ticks=false,
          ymin = 0,
          ylabel = {Normalized Time},
          enlarge x limits=0.1,
          extra y ticks = {1},
          extra y tick style = {grid = major, grid style={dashed, black}},
        ]
        \addplot table [x expr=\coordindex, y expr=\thisrow{oclang}/\thisrow{oclang}] {\datatable};
        \addplot table [x expr=\coordindex, y expr=\thisrow{Jasmin}/\thisrow{oclang}] {\datatable};
        \addplot table [x expr=\coordindex, y expr=\thisrow{FaCT}/\thisrow{oclang}] {\datatable};

        \legend{\oclang,Jasmin,FaCT};
      \end{axis}
    \end{tikzpicture}
    \caption{Performance comparison of oblivious programs using different tools.}\label{fig:perf-compare}
\end{figure}

%% file: chapters/50-related.tex
\ifcr\else

\section{Related Work}
\label{sec:related}

Researchers have proposed several DSLs that provide compile-time obliviousness guarantees. Jasmin~\cite{almeidaJasminHighAssuranceHighSpeed2017} targets constant-time cryptographic code: programmers annotate secret variables and supply loop invariants, and the compiler uses formal verification tools to check constant-time behavior and invariants; importantly, it offers a fully verified end-to-end compilation flow. ObliVM~\cite{liuObliVMProgrammingFramework2015a} includes a language and compiler that not only checks behaviors but also enforces control-flow obliviousness by transforming programs and inserting dummy operations for secret-dependent branches; it also introduces an \t{rnd} type that restricts use before declassification, but this alone cannot ensure strict security (e.g., it cannot prevent nonce reuse across multiple encryptions). Obliv-c~\cite{zahurOblivCLanguageExtensible2015} extends C by adapting the CIL frontend to translate obliv-c into C, but supports only single-level obliviousness, making it insufficiently expressive for real-world programs, and its rules lack formal security analysis. FaCT~\cite{cauligiFaCTDSLTimingsensitive2019b} generates constant-time code via more aggressive control-flow transformations than ObliVM, automatically rewriting all \t{if} statements and range-based \t{for} loops.

Other prior works focus on checks or transformations on IRs or binaries. ObliCheck~\cite{sonObliCheckEfficientVerification2021} focuses solely on static analysis of memory trace obliviousness, not control-flow obliviousness. Recognizing that previous taint analysis methods were too coarse-grained for accurate checking, ObliCheck employs symbolic execution with sophisticated state merging to accelerate its verification. Raccoon~\cite{raneRaccoonClosingDigital2015} transforms compiled LLVM IR from C source code, obfuscating non-oblivious memory operations using \t{cmove} instructions to hide dependencies on secret information. Constantine~\cite{borrelloConstantineAutomaticSideChannel2021} performas IR-level transformation that lineraizes control flow, with support of JIT transformations and advanced optimizations.

Some works handles a more relaxed model. Liu et al.~\cite{liuMemoryTraceOblivious2013} take a different approach, separating variables into ORAM banks based on access patterns, and leveraging the ORAM properties to achieve oblivious memory accesses. EncLang~\cite{sinhaCompilerVerifierPage2017} addresses page-table attacks by making page access patterns oblivious through data layout and memory access rearrangement during machine code generation.

Our \oclang makes several contributions compared to previous designs:

First, we develop semantics for oblivious programming in a modern industry-standard language, supporting important features that previous oblivious DSLs failed to address, including nested references/pointers, compound types, function parameter passing, and templates.

Second, we propose a workflow requiring minimal modification to existing codebases and toolchains while reliably creating oblivious programs. Programmers need only add necessary annotations and compiler options. The workflow flexibly accommodates structures that cannot be automatically checked, with annotations clearly indicating which program parts require manual auditing.

Third, our evaluation shows that \oclang adds negligible compilation overhead, and our optimization suppression approach has minimal performance impact. Leveraging LLVM and \clang's mature ecosystem, our implementations of oblivious algorithms achieve better performance compared to previous tools.

\fi

%% file: chapters/51-conclusion.tex
\section{Conclusion}\label{sec:conclusion}

We present \oclang, a compiler plugin for the C++ language that performs static checks for oblivious programming. It is specifically implemented using the \clang and LLVM infrastructures.
Compared to previous work, \oclang aims to provide natural semantics and maximum compatibility with widely used industry-level programming languages and workflow while being able to provide a verifiable soundness guarantee even in the presence of complex language features. Our experiments show that \oclang leads to low compilation overheads and can achieve better performance on the compiled oblivious programs compared to previous tools.